\documentclass[11pt,draftcls,peerreviewca,onecolumn,a4paper,dvips]{IEEEtran}
\usepackage{tipa}
\usepackage{amsfonts}
\usepackage{amsfonts}
\usepackage{amsfonts}
\usepackage{amsfonts}
\usepackage{amssymb}
\usepackage{stfloats}
\usepackage{cite}
\usepackage{graphicx}
\usepackage{psfrag}
\usepackage{subfigure}
\usepackage{amsmath}
\usepackage{array}
\usepackage[usenames]{color}

\newcommand{\black}{\color{black}}

\interdisplaylinepenalty=2500 \hyphenation{op-tical net-works
semi-conduc-tor IEEEtran}

\begin{document}

\title{Decentralized Delay Optimal Control for Interference Networks with Limited Renewable Energy Storage}

\newtheorem{Thm}{Theorem}
\newtheorem{Lem}{Lemma}
\newtheorem{Cor}{Corollary}
\newtheorem{Def}{Definition}
\newtheorem{Exam}{Example}
\newtheorem{Alg}{Algorithm}
\newtheorem{Prob}{Problem}
\newtheorem{Rem}{Remark}
\newtheorem{Proof}{Proof}
\newtheorem{Subproblem}{Subproblem}
\newtheorem{assumption}{Assumption}

\author{\authorblockN{Huang Huang {\em Member, IEEE}, Vincent K. N. Lau, {\em Fellow, IEEE}}
\thanks{The authors are with the Department of Electronic and Computer Engineering
(ECE), Hong Kong University of Science and Technology (HKUST), Hong
Kong.}
}

\maketitle

\begin{abstract}
In this paper, we consider delay minimization for interference networks with renewable energy source, where the transmission power of a node comes from both the conventional utility power (AC power) and the renewable energy source. We assume the transmission power of each node is a function of the {\em local channel state, local data queue state and local energy queue state} only. In turn, we consider two delay optimization formulations, namely the {\em decentralized partially observable Markov decision process (DEC-POMDP)} and { \em Non-cooperative partially observable stochastic game (POSG)}. In DEC-POMDP formulation, we derive a decentralized online learning algorithm to determine the control actions and Lagrangian multipliers (LMs) simultaneously, based on the {\em policy gradient} approach. Under some mild technical conditions, the proposed {\em decentralized policy gradient} algorithm converges almost surely to a local optimal solution. On the other hand, in the non-cooperative POSG formulation, the transmitter nodes are non-cooperative. We extend the decentralized policy gradient solution and establish the
technical proof for almost-sure convergence of the learning algorithms. In both cases, the solutions are very robust to model variations. Finally, the delay performance of the proposed solutions are compared with conventional baseline schemes for interference networks and it is illustrated that substantial delay performance gain and energy savings can be achieved.
\end{abstract}


\newpage

\section{Introduction}
Recently, there have been intense research interests to study the interference channels. In \cite{IA:conventional:2008,IA:M*N_MIMO:2008}, the authors show that interference alignment (using infinite dimension symbol extension in time or frequency selective fading channels) can achieve optimal Degrees-of-freedom (DoF) and the total capacity of the $K$-user interference channels is given by $\frac{K}{2}\log(\text{SNR})+ o(\log(\text{SNR}))$. In \cite{alter:VT:2011,IA:distributed:2008}. the authors consider joint beamforming to minimize the weighted sum MMSE or maximize the SINR of $K$-pairs MIMO interference channels using optimization approaches. In \cite{mimo:game:scutari,mimo:game:Arslan}, the authors considered decentralized beamforming design for MIMO interference networks using non-cooperative games and studied the sufficient conditions for the existence and convergence of the Nash Equilibrium (NE). However, all of these works have assumed that there are infinite backlogs at the transmitters, and focused on the maximization of physical layer throughput. In practice, applications are delay sensitive, and it is critical to optimize the delay performance in the interference network.

%

The design framework taking into consideration of queueing delay and physical layer performance is not trivial as it involves both {\em queuing theory} (to model the queuing dynamics) and {\em information theory} (to model the physical layer dynamics) \cite{Vincent:MIMO}. The simplest approach is to convert the delay constraints into an equivalent average rate constraint using tail probability (large derivation theory), and solve the optimization problem using a purely information theoretical formulation based on the equivalent rate constraint \cite{Hui:07}. However, the control policy thus derived is a function of the channel state information (CSI) only, and it fails to exploit data queue state information (DQSI) in the adaptation process. The Lyapunov drift approach is also widely used in the literature \cite{Neely:2008} to study the queue stability region of different wireless systems and to establish the
throughput optimal control policy (in stability sense). A systematic approach in dealing with delay-optimal resource control in general delay regime is based on the Markov decision process (MDP) technique\cite{Delay_IT:2006,Vincent:MIMO,Cao:2007}. However, brute-force solution of MDP is usually very complex (owing to the curse of dimensionality) and extension to multi-flow problems in interference networks is highly non-trivial.

Another interesting dimension that has been ignored by most of the above works is the inclusion of {\em renewable energy source} on the transmit nodes. For instance, there are intense research interests in exploiting renewable energy in communication network designs\cite{Niyato:2007,Niyato:MC:2007,huang:energy:2011,Kansal:2007}. { In \cite{Niyato:2007,Niyato:MC:2007},
the authors presented an optimal energy management policy for a solar-powered device that uses a sleep
and wake up strategy for energy conservation in wireless sensor networks.} In \cite{huang:energy:2011}, the authors developed
a solar energy prediction algorithm to estimate the amount of energy harvested by solar panels to deploy
power-efficient task management methods on solar energy-harvested wireless sensor nodes. In \cite{Kansal:2007}, the
author proposed a power management scheme under the assumption that the harvested energy satisfies
performance constraints at the application layer. However, in all these works, the delay requirement of applications have been completely ignored. Furthermore, the renewable energy source can act as low cost supplement to the conventional utility power source in communication networks. Yet, there are various technical challenges regarding delay optimal design for interference networks with renewable energy source.

\begin{itemize}

\item {\bf Randomness of Renewable Energy Source:} Recent developments in hardware design have made {\em energy harvesting} possible in wireless communication networks \cite{Sharma:energy:2010,green:Gozalvez}. For example, we have solar-powered base stations available from various telecommunication vendors \cite{green:Gozalvez}. While the renewable energy source may appear to be completely free, there are various challenges involved to fully capture its advantage. For instance, the renewable energy sources are random in nature and energy storage is needed to buffer the unstable supply of renewable energy. Yet, the cost of energy storage depends heavily on the associated capacity \cite{store:2009}. For limited capacity energy storage, the transmission power allocation should be adaptive to the CSI, the DQSI as well as the energy queue state information (EQSI). The CSI, DQSI and EQSI provide information regarding the {\em transmission opportunity}, the {\em urgency of the data flows}, and the {\em available renewable energy}, respectively. It is highly non-trivial to strike a balance among these factors in the optimization.


\item{\bf Decentralized Delay Minimization:} The existing works for the throughput or DoF optimization in the interference network [1-6] requires global knowledge of CSI, which leads to heavy backhaul signaling overhead and high computational complexity for the central controller. For delay minimization with renewable energy source, the entire system state is characterized by the global CSI (CSI from any transmitter to any receiver), the global QSI (data queue length of all users), and the global EQSI (energy queue length of all users). Therefore, the centralized solution (which requires global CSI, DQSI and EQSI) will also induce substantial signaling overhead, which is not practical. It is desirable to have decentralized control based on local observations only. However, due to the partial observation of the system state in decentralized designs, existing solutions of the MDP approach cannot be applied to our problem.

\item{\bf Algorithm Convergence Issue:} In conventional iterative solutions for deterministic network utility maximization (NUM) problems, the updates in the iterative algorithms (such as subgradient search) are performed within the coherence time of the CSI (i.e., the CSI remains quasi-static during the iteration updates) \cite{mimo:game:scutari,mimo:game:Arslan}. When we consider delay minimization, the problem is stochastic and the control actions are defined over ergodic realizations of the system states (CSI, DQSI and EQSI). Furthermore, the restriction of partial observation of system states in decentralized control further complicates the problem. As a result, the convergence proof of the decentralized stochastic algorithm is highly non-trivial.
\end{itemize}

In this paper, we consider delay minimization for interference networks with renewable energy source. The transmitters are capable of harvesting energy from the environment, and the transmission power of a node comes from both the conventional utility power (AC power) and the renewable energy source. For decentralized control, we assume the transmission power of each node is adaptive to the {\em local system states} only, namely the local CSI (LCSI), the local DQSI (LDQSI) and the local EQSI (LEQSI). We consider two delay optimization formulations, namely the {\em decentralized partially observable MDP (DEC-POMDP)}, { which corresponds to a cooperative stochastic game setup (where each user cooperatively share a common system utility),} and { \em non-cooperative partially observable stochastic game (POSG)}, { which corresponds to a non-cooperative stochastic game setup (where each user has a different (and selfish) utility}. In DEC-POMDP formulation, the transmitters are fully cooperative and we derive a decentralized online learning algorithm to determine the control actions and the Lagrangian multipliers (LMs) simultaneously based on the {\em policy gradient} approach \cite{Cao:2007,GPOMDP:experiment:2001}. Under some mild technical conditions, the proposed {\em decentralized policy gradient} algorithm converges almost surely to a local optimal solution. On the other hand, in the non-cooperative POSG formulation, the transmitters are non-cooperative\footnote{Non-cooperative nodes means that each transmitter shall optimize its own utility in a selfish manner.} and we extend the {\em decentralized policy gradient} algorithm and establish the technical proof for almost-sure convergence of the learning algorithms. In both cases, the solutions do not require explicit knowledge of the CSI statistics, random data source statistics as well as the renewable energy statistics. Therefore, the solutions are very robust to model variations. Finally, the delay performance of the proposed solutions are compared with conventional baseline schemes for interference networks and it is illustrated that substantial delay performance gain and energy savings can be achieved by incorporating the CSI, DQSI and EQSI in the power control design.

\section{System Model}\label{sec:model}

We consider $K$-pair interference channels sharing a common spectrum with bandwidth $W$Hz as illustrated in Fig. \ref{fig:system_model}. Specifically, each transmitter maintains a data queue for the random traffic flow towards the desired receiver in the system. { Furthermore, the transmitters are fixed base stations but the receiver can be mobile.} The time dimension is partitioned into scheduling frames (that lasts for $\tau$ seconds). In the following subsections, we shall elaborate the physical layer model, the { random data source model} as well as the renewable energy source model.

\subsection{Physical Layer Model}
The signal received at the $k$-th receiver is given by:
\begin{equation}\label{eq:system_model}
y_k = \underbrace{\sqrt{P_kL_{kk}H_{kk}}x_k}_{\text{desired signal}} + \underbrace{\sum\nolimits_{n\neq k}
\sqrt{P_nL_{kn}H_{kn}}x_n}_{\text{interference}} + z_k,
\end{equation}
where $L_{kn}$ and $H_{kn}$ are the long term path loss and the microscopic channel fading gain respectively, from the $n$-th
transmitter to the $k$-th receiver. $P_k$ is the total transmission power of the $k$-th transmitter. $x_n$ is the information symbol sent by the $n$-th transmitter, and $z_k$ is the additive white Gaussian noise with variance $N_0$. For notation convenience, we define the global CSI as $\mathbf{H}=\{H_{kn},\forall k,n\}$. Furthermore, the assumption on channel model is given as follows.
\begin{assumption}[Channel Model]\label{ass:csi_model}
We assume that the global CSI $\mathbf{H}$ is quasi-static in each frame. Furthermore, $H_{kn}(t)$ is i.i.d. over the scheduling frame
according to a general distribution $\Pr\{H_{kn}\}$ with $\mathbb{E}[H_{kn}] = 1$ and $H_{kn}$
is independent w.r.t. $\{n,k\}$. The path loss $L_{kn}$ remains constant for the duration of the communication session. ~\hfill
\IEEEQED
\end{assumption}

{
Given transmission powers $\{P_k^{tx}\}$, the transmit data rate is given by:
\begin{equation}\label{eq:rate}
R_{k}\leq
W\log_2\left(1+\frac{\xi
P_k^{tx}L_{kk}H_{kk}}{\sum\nolimits_{n\neq
k}
P_nL_{kn}H_{kn}+N_0W}\right),
\end{equation}}
where $\xi\in(0,1]$ is a constant. Note that \eqref{eq:rate} can be used to model both uncoded and coded systems \cite{MQAM:2007}. For example, $\xi=0.5$ for QAM constellation at BER$=1\%$ and $\xi = 1$ for capacity achieving coding (in which \eqref{eq:rate} corresponds to the instantaneous mutual information).


\subsection{{ Random Data Source Model} and Data Queue Dynamics}

Let $\mathbf{A}(t) = \{A_1(t),\cdots,A_K(t)\} $ be the random new arrivals
(number of bits) at the $K$ transmitters at the end of the $t$-th scheduling
frame.
\begin{assumption}[{ Random Data Source Model}]
The arrival process $A_k(t)$ is i.i.d. over the scheduling frame and
is distributed according to a general distribution $\Pr\{A_k\}$ with
average arrival rate $\lambda_{k}=\mathbb{E}[A_k]$. Furthermore, the
random arrival process $\{A_k\}$ is independent w.r.t. $k$.
~\hfill\IEEEQED\label{Def:data}
\end{assumption}

Let $\mathbf{Q}(t) =\{Q_1(t),\cdots,Q_K(t)\}$ denote the
global DQSI in the system, where $Q_k(t)$
represents the number of bits at the queue of transmitter $k$ at the
beginning of frame $t$. { $N_{k}^{Q}$ denotes the maximal buffer size (number of bits) of user $k$.} When
the buffer is full, i.e., $Q_k = N_{k}^{Q}$, new bit arrivals will be
dropped. The cardinality of the global QSI is $I_Q = (1 + N_{k}^{Q})^K$.
Given a new arrival $A_k(t)$ at the end of frame $t$, the
queue dynamics of transmitter $k$ is given by:
\begin{equation}\label{eq:Q_org}\begin{array}{lll}
Q_{k}(t+1) = \Big[\big[Q_k(t)-R_k(t)\tau\big]^+ + A_k(t)
\Big]_{\bigwedge N_{k}^{Q}},
\end{array}
\end{equation}
where $R_k(t)$ is the achievable data rate for receiver $k$ at frame $t$ given in \eqref{eq:rate}, and $[x]_{\bigwedge N_{k}^{Q}}=\min(x,N_{k}^{Q})$.

\subsection{Power Consumption Model with Renewable Energy Source}
The transmission power of each node comes from both the AC power source and the renewable energy source. Specifically, the transmitter is assumed to be capable of harvesting energy from the environment, e.g., using solar panels \cite{green:Gozalvez,green:Jorguseski}. However, the amount of harvestable energy in a frame is random. Let $\mathbf{X}(t) = \{X_1(t),\cdots,X_K(t)\} $ be the harvestable energy (Joule) by the $K$ transmitters during the $t$-th scheduling frame. Note that the harvestable energy $\mathbf{X}(t)$ can be interpreted as the {\em energy arrival} at the $t$-th frame.
\begin{assumption}[Random Renewable Energy Model]
The random process $X_k(t)\geq0$ is i.i.d. over the scheduling frame and is distributed according to a general distribution $\Pr\{X_k\}$ with mean renewable energy $\overline{X}_k = \mathbb{E}[X_k]$. Furthermore, the random process $\{X_k\}$ is independent w.r.t. $k$.

~\hfill\IEEEQED\label{Def:energy}
\end{assumption}

Let $\mathbf{E}(t) =\{E_1(t),\cdots,E_K(t)\}$ denote the global EQSI in the system, where $E_k(t)$ represents the renewable energy level at the energy storage of the $k$-th transmitter at the beginning of frame $t$. { Let $N_{k}^{E}$ denote the maximum energy queue buffer size (i.e., energy storage capacity in Joule) of user $k$.} When the energy buffer is full, i.e., $E_k = N_{k}^{E}$, additional energy cannot be harvested. Given an energy arrival of $X_k(t)$ at the end of frame $t$, the energy queue dynamics of transmitter $k$ is given by:
\begin{equation}\label{eq:EQ_org}\begin{array}{lll}
E_{k}(t+1) = \Big[\big[E_k(t)-P_{k,e}^{tx}(t)\tau\big]^+ + X_k(t)
\Big]_{\bigwedge N_{k}^{E}},
\end{array}
\end{equation}
where $P_{k,e}^{tx}(t)$ is the renewable power consumption that must satisfy the following {\em energy-availability} constraint\footnote{ We consider a discrete time system with fixed time step $\tau$. Hence,  $E_k(t)$ represents the energy level at the renewable energy storage of the $k$-th transmitter at the beginning of frame $t$, and $P_{k,e}^{tx}(t) \tau$ is the renewable energy consumption. As a result, $P_{k,e}^{tx}(t) \tau$ (energy consumed from the renewable energy storage) cannot be larger than $E_k(t)$ (total energy available from the renewable energy storage).}:
\begin{equation}\label{eq:energy_constraint}
P_{k,e}^{tx}(t)\tau \leq E_k(t),\forall k,
\end{equation}

{ The power consumption is contributed by not only the transmission power of the power amplifier (PA) but also the circuit power of the RF chains (such as the mixers, synthesizers and digital-to analog converters). Furthermore, the circuit power $P_{cct}$ is constant irrespective of the transmission data rate. Therefore, the total power consumption of user $k$ at the $t$-th frame is given by
\begin{eqnarray}
P_k(t)=P_{k}^{tx}(t)+P_{cct}\cdot\mathbf{1}\big(P_{k}^{tx}>0\big)
\end{eqnarray}
Note that in practice, due to the random nature of the renewable energy and the limited renewable energy storage capacity, it can be used only as a supplementary form of power rather than completely replacing the AC utility power. To support a total power consumption of $P_k(t)$, we can have power circuitry \cite{Niyato:2007,Niyato:MC:2007} to control the contributions from AC utility $P_{k,ac}(t)$ as well as the renewable energy storage $P_{k,e}(t)$ as illustrated in Fig. \ref{fig:system_model}.  This is similar in concept to hybrid cars where the power is contributed by both the gas engine and the battery. As a result, the total power consumption $P_k(t)$ is given by: $P_k(t) = P_{k,ac}(t) + P_{k,e}(t)$. Given $P_{k,ac}(t)$ and $P_{k,e}(t)$, the transmission power $P_{k}^{tx}(t)$ is given by:
\begin{equation}
P_{k}^{tx}(t) = \Big( P_{k,ac}(t)+P_{k,e}(t)-P_{cct}\cdot\mathbf{1}\big(P_{k}^{tx}>0\big) \Big)^+
\end{equation}}

\section{Delay Optimal Power Control}
\subsection{Control Policy and Resource Constraints}
We define $\boldsymbol{\chi}=\{\mathbf{H},\mathbf{Q},\mathbf{E}\}$ as the global system state, and
$\chi_k=\{\{H_{kn},\forall n\},Q_k,E_k\}$ as the local system state for the $k$-th transmit node, where $\{H_{kn},\forall n\}$ is the LCSI\footnote{We denote the local CSI at the $k$-th transmit node as $\{H_{kn},\forall n\}$. However, in practice, the $k$-th transmit node only needs to observe $H_{kk}$ and the total interference $\sum\nolimits_{n\neq k}P_nL_{kn}H_{kn}$.}, $Q_k$ is the LDQSI and $E_k$ is the LEQSI.
Based on the local system state $\chi_k$, transmitter $k$ determines the power consumption { $\mathbf{P}_k=\{P_{k,ac}\in\mathcal{A}_{ac},P_{k,e}\in\mathcal{A}_{e}\}$} using a control policy defined below, where $\mathcal{A}_{ac}=\{a_{ac}^1,\cdots,a_{ac}^N\}$ and $\mathcal{A}_{e}=\{a_{e}^1,\cdots,a_{e}^N\}$ are the AC power allocation space and the renewable power allocation space (both with cardinality $N$), respectively.
%
\begin{Def}[Stationary Randomized Decentralized Power Control Policy]\label{def:policy} A stationary randomized power control
policy for user $k$, $\Omega_{k}: \chi_k\to \mathcal{P}(\mathcal{A}_{ac},\mathcal{A}_{e})$, is a mapping from the
local system state $\chi_k$ to a probability distribution over the power allocation space $\{\mathcal{A}_{ac},\mathcal{A}_{e}\}$, i.e., $\Omega_{k}(\chi_k)=\mathbf{p}=\{p_{1,1},\cdots,p_{N,N}\}\in\mathcal{P}(\mathcal{A}_{ac},\mathcal{A}_{e})$, where $\mathcal{P}(\mathcal{A}_{ac},\mathcal{A}_{e})=\{\mathbf{p}:\sum_{i,j} p_{i,j}=1 \text{ and } p_{i,j}\geq0,\forall i,j\}$ is the space of joint probability distribution over the power allocations, and $p_{i,j}$ denotes the probability of transmission powers $\{P_{k,ac}=a_{ac}^i,P_{k,e}=a_{e}^j \}$.  ~\hfill\IEEEQED
\end{Def}

For simplicity, denote the joint control policy as $\Omega=\{\Omega_k,\forall k\}$. Note that the power allocation policy $\Omega_k$ should satisfy the energy-availability constraint given in \eqref{eq:energy_constraint}, i.e., given $\chi_k=\{H_{kk},Q_k,E_k\}$, the probability $p_{i,j}$ of transmission powers $\{P_{k,ac}=a_{ac}^i,P_{k,e}=a_{e}^j \}$ satisfy
\begin{equation}
p_{i,j}=0, \text{ if } a_{e}^j > E_k/\tau.
\end{equation}
{ Furthermore, $\Omega_k$ should meet the requirement of circuit power $P_{cct}$ consumption, i.e.,
\begin{equation}
p_{i,j}=0, \text{ if } 0<a_{ac}^i+a_{e}^j < P_{cct}
\end{equation}}
Finally, $\Omega_k$ should also satisfy the per-user average AC power consumption constraint:
\begin{eqnarray}\label{eq:pwr_con}
\overline{P}_k(\Omega)=\lim\sup_{T\rightarrow\infty}\frac{1}{T}\sum_{t=1}^T\mathbb{E}^{\Omega}[P_{k,ac}(t)]\leq
P_k^0,
\end{eqnarray}
where the expectation
in \eqref{eq:pwr_con} is taken w.r.t. the
induced probability measure from the policy
$\Omega$.

{
\begin{Rem}[Formulation with two optimization variables $\{P_{ac},P_{e}\}$]
While the ``reward'' of the system dynamics (the transmission rate in \eqref{eq:rate}) depends on the total transmission power $P$ only, it does not mean the problem can be formulated with just one variable (total transmission power). We also have to look at the ``cost'' side. While the total power consumption $P_{total} = P_{ac} + P_e$, $P_{ac}$ and $P_e$ have different cost structure (and different constraints) as in \eqref{eq:pwr_con} and \eqref{eq:energy_constraint}, respectively. Hence, the problem with $P_{ac}$ and $P_e$ as variables cannot be transformed or reduced into a problem with $P_{total}$ as one variable only (due to the constraints).  ~\hfill\IEEEQED
\end{Rem}
}

\subsection{Parametrization of Control Policy and Dynamics of System State}
In this paper, we consider the parameterized stationary randomized policy, which is widely used in the literature \cite{Marbach:TAC:2001,GPOMDP:experiment:2001,Tao:2001,Buffet:2007}. Specifically, the randomized policy $\Omega_k$ can be parameterized by $\Theta_k$. For example, when a local system state realization $\chi_k$ is observed, { the power consumption of transmit node $k$ is $\mathbf{P}_k=\{P_{k,ac},P_{k,e}\}$ with probability $\mu_{\chi_k}(\Theta_k,\mathbf{P}_k)$ given by\cite{Tao:2001}:
\begin{equation}\label{eq:theta_looktable}
\mu_{\chi_k}(\Theta_k,\mathbf{P}_k)=\left\{
\begin{array}{lll}
\frac{\exp(\theta_{\chi_k,\mathbf{P}_k})}{\sum_{i,j}\exp(\theta_{\chi_k,(a_{ac}^i,a_{e}^j)})\mathbf{1}(a_{e}^j\leq E_k/\tau)} &
\begin{array}{l}
\text{if } P_{k,e} = P_{k,ac} = 0 \text{ or }\\
\text{if } P_{k,e}\leq E_k/\tau
\text{ and } P_{k,e}+P_{k,ac} \geq P_{cct}
\end{array}
\\
0 &\text{ otherwise,}
\end{array}\right.
\end{equation}}
where $\mathbf{1}(\cdot)$ is the indicator function, and $\Theta_k=\{\theta_{\chi_k,\mathbf{P}_k}\in\mathbb{R},\forall \chi_k,\mathbf{P}_k\}$. As a result, the control policy $\Omega_k$ is now parameterized by $\Theta_k$ and is denoted by $\Omega_k^{\Theta_k}$. Another possible parameterization is to use {\em neural network} \cite{Marbach:TAC:2001,GPOMDP:experiment:2001} where the probability is given by:{
\begin{equation}
\mu_{\chi_k}(\Theta_k,\mathbf{P}_k)=\left\{
\begin{array}{lll}
\frac{\exp\big(\theta_{k1}+\sum_{i=2}^{\alpha}\theta_{ki}f_{ki}(\chi_k,\mathbf{P}_k)\big)} {\sum_{i,j}\exp\big(\theta_{k1}+\sum_{i=2}^{\alpha}\theta_{ki}f_{ki}(\chi_k,(a_{ac}^i,a_{e}^j))\big)\mathbf{1}(a_{e}^j\leq E_k/\tau)} & \begin{array}{l}
\text{if } P_{k,e} = P_{k,ac} = 0 \text{ or }\\
\text{if } P_{k,e}\leq E_k/\tau
\text{ and } P_{k,e}+P_{k,ac} \geq P_{cct}
\end{array}
\\
0 &\text{ otherwise,}
\end{array} \right.
\end{equation}}
where $\Theta_k=\{\theta_{ki}\in\mathbb{R}, i=1,\cdots,\alpha\}$ is the parameter and $f_{ki}(\chi_k,\mathbf{P}_k)$ is the prior {\em basis function}. Note that the dimension of the parameter $\Theta_k$ is reduced to $\alpha$ in this case.

For a given stationary parameterized control policy $\Omega^{\Theta}$
($\Theta=\{\Theta_k,\forall k\}$), the induced random
process $\{\boldsymbol{\chi}(t)\}$ is a controlled Markov chain with
transition probability
\begin{equation}\label{eq:sys_tran}
\Pr\{\boldsymbol{\chi}(t+1)|\boldsymbol{\chi}(t),\Omega^{\Theta}\}=\Pr\{H(t+1)\}
\Pr\{\mathbf{Q}(t+1),\mathbf{E}(t+1)|\boldsymbol{\chi}(t),\Omega^{\Theta}(\boldsymbol{\chi}(t))\},
\end{equation}
where the joint data and energy queue transition probability is given by
\begin{equation}\label{eq:Q_tran}
\begin{array}{ll}
&\Pr\{\mathbf{Q}(t+1),\mathbf{E}(t+1)|\boldsymbol{\chi}(t),\Omega^{\Theta}(\boldsymbol{\chi}(t))\}\\
=&\left\{
\begin{array}{lll}
\prod_{k}\Pr\{A_{k}(t)\}\Pr\{X_{k}(t)\}\mu_{\chi_k}(\Theta_k,\mathbf{P}_k(t)) & \text{if } \left. Q_{k}(t+1)=\hat{Q}_k \atop E_k(t+1)=\hat{E}_k \right. \forall k\\
0 & \text{ otherwise,}
\end{array}\right.
\end{array}
\end{equation}
where $\hat{E}_k=\Big[\big[E_k(t)-P_{k,e}(t)\tau\big]^+ + X_k(t)\Big]_{\bigwedge N_{k}^{E}}$, $\hat{Q}_k=\Big[\big[Q_k(t)-R_k(t)\tau\big]^+ +A_k(t)\Big]_{\bigwedge N_{k}^{Q}}$, and $R_k(t)$ is the achievable data rate of receiver $k$ given in \eqref{eq:rate} under the power allocation $\mathbf{P}=\{\mathbf{P}_k(t),\forall k\}$.
{  Note that it is not sufficient to specify the evolution of the joint process $(\chi_1(t),\cdots,\chi_K(t))$ by just describing the measure of individual local processes $\chi_k(t)$. This is because the individual state process $\chi_k(t)$ are not independent and there are mutual coupling.}

Given a unichain policy $\Omega^{\Theta}$, the induced
Markov chain $\{\boldsymbol{\chi}(t)\}$ is ergodic and there exists
a unique steady state distribution $\pi_{\chi}$, where
$\pi_{\chi}(\boldsymbol{\chi})=\lim_{t\to\infty}\Pr\{\boldsymbol{\chi}(t)=\boldsymbol{\chi}\}$\cite{Cao:2007}.
The average delay utility of user $k$, under a unichain policy $\Omega^{\Theta}$, is given by:
\begin{equation}
\label{eq:T_single} \overline{T}_{k}(\Theta)=\lim\sup_{T\rightarrow
\infty}\frac{1}{T}\sum\nolimits_{t=1}^T\mathbb{E}^{\Omega^{\Theta}}[f(Q_{k}(t))],
\end{equation}
where $f(Q_{k})$ is a monotonic increasing utility function of
$Q_{k}$. For example, when $f(Q_{k})=Q_{k}/\lambda_{k}$, using
Little's Law \cite{Cao:2007}, $\overline{T}_{k}(\Omega)$ is the
{\em average delay}\footnote{ Since the buffer size is finite, $\overline{T}_{k}(\Omega)$ is the average delay when $f(Q_k) = Q_k/ ( \lambda_k (1- P_{loss}))$, where $P_{loss}$ is the packet drop rate due to buffer overflow. However in practice our target $P_{loss}\ll 1$, and hence $f(Q_k) = Q_k/ ( \lambda_k )$ is a good approximation for the average delay. Furthermore, this approximation is asymptotically tight as the data buffer size increases. In practice, the approximation error will not be significant since the system will have reasonable $P_{loss}$ (e.g. $0.1\%$).} of user $k$. When $f(Q_{k})=\mathbf{1}(Q_k\geq
Q_k^0)$, $\overline{T}_{k}(\Theta)$ is {\em queue outage probability}\footnote{The probability that the queue
state exceeds a threshold $Q_k^0$, { i.e., $\Pr\{Q_k\geq Q_k^0\}$.}}. { Since $\lambda_k$ is a constant, the average delay $\overline{T}_{k}(\Theta)$ is proportional to the average queue length.}

\subsection{Problem Formulation}\label{sec:prob}
Note that the stochastic dynamics of the $K$ data queues and energy queues are coupled
together via the control policy $\Omega^{\Theta}$. In this paper, we consider two different decentralized control problems:
\subsubsection{DEC-POMDP Problem}
In this case, all the transmitter nodes are cooperative and we seek to
find an optimal stationary control policy $\Omega^{\Theta}$ to minimize a common
weighted sum delay utility in (\ref{eq:T_single}). Since the control policy $\Omega_k^{\Theta_k}$ is only a function of the local system state $\chi_k$, the problem is a partially observed MDP, which is summarized below:
\begin{Prob}[Delay Optimal DEC-POMDP]\label{prob:POMDP}
For some positive constants $\boldsymbol{\beta}=\{\beta_{k},\forall
k\}$, find a stationary control policy $\Omega^{\Theta}$ that minimizes:
\begin{equation}
\label{eq:problem:POMDP}\begin{array}{l}
\min_{\Theta}\overline{T}_{\beta}^{\Theta}=\sum\nolimits_{k}\beta_{k}\overline{T}_{k}(\Theta)=\lim\sup_{T\rightarrow
\infty}\frac{1}{T}\sum\nolimits_{t=1}^T
\mathbb{E}^{\Omega^{\Theta}}\Big[g\big(\boldsymbol{\chi}(t),\Omega^{\Theta}(\boldsymbol{\chi}(t))\big)\Big]\\
\text{subject to}\quad \overline{P}_k(\Omega^{\Theta})=\overline{P}_k(\Theta)\leq
P_k^0, \forall k \\
\quad\quad\quad\quad\quad { E_k\leq N_{k}^E, \forall k}
\end{array},
\end{equation}
where
$g\big(\boldsymbol{\chi}(t),\Omega^{\Theta}(\boldsymbol{\chi}(t))\big)=\sum_{k}\beta_{k}f(Q_k)$
is the joint per-stage utility. The positive constants
$\boldsymbol{\beta}$ indicate the relative importance of the users,
and for the given $\boldsymbol{\beta}$, the solution to
\eqref{eq:problem:POMDP} corresponds to a Pareto optimal point of the
multi-objective optimization problem: $\min_{\Theta}
\overline{T}_{k}(\Theta), \forall k$.  ~\hfill\IEEEQED
\end{Prob}

{ Note that the average AC power constraint is commonly used in a lot of existing studies \cite{Vincent:MIMO,Delay_IT:2006}
and is very relevant in practice (because the electric bill is charged by average AC power consumption $\times$ time of usage). The motivation of Problem \ref{prob:POMDP} is to optimize the delay performance under average cost constraint (AC power) by fully utilizing the free renewable energy. Problem \ref{prob:POMDP} is also equivalent to minimizing the average AC power consumption subject to average delay constraint because they have the same Lagrangian function. }

\subsubsection{Non-Cooperative POSG Problem}
In this case, the $K$ transmitter nodes are non-cooperative and we formulate
the delay utility minimization problem as a non-cooperative partially observable stochastic game (POSG), in which the user $k$ competes
against the others by choosing his power allocation policy $\Omega_{k}^{\Theta_k}$, to maximize
his average utility selfishly. Specifically, the non-cooperative POSG is formulated as Problem \ref{prob:POSG}
\begin{Prob}[Delay Optimal Non-Cooperative POSG]\label{prob:POSG}
For transmitter $k$, we try to find a stationary control policy $\Omega_k^{\Theta_k}$ that minimizes:
\begin{equation}
\label{eq:problem:POSG}\begin{array}{l}
\min_{\Theta_k} \overline{T}_{k}(\Theta_k,\Theta_{-k})=\lim\sup_{T\rightarrow
\infty}\frac{1}{T}\sum\nolimits_{t=1}^T\mathbb{E}^{\Omega_k^{\Theta_k},\Omega_{-k}^{\Theta_{-k}}}[f(Q_{k}(t))]\\
\text{subject to}\quad \overline{P}_k(\Theta_k,\Theta_{-k})\leq
P_k^0,\\
\quad\quad\quad\quad\quad { E_k\leq N_{k}^E, \forall k}
\end{array},\forall k
\end{equation}
where $\Theta_{-k}=\{\Theta_{q=1,q\neq k}^K\}$, and $\Omega_{-k}^{\Theta_{-k}}=\{\Omega_q^{\Theta_q}\}_{q=1,q\neq k}^K$ is the set of all the users' policies except the $k$-th user. ~\hfill\IEEEQED
\end{Prob}

The {\em local equilibrium} solutions of the non-cooperative POSG \eqref{eq:problem:POSG} are formally defined as follows.
\begin{Def}[Local Equilibrium of Non-Cooperative POSG] A profile of the power allocation policy $\Omega^{\Theta^*}=\{\Omega_1^{\Theta_1^*},\cdots,\Omega_K^{\Theta_K^*}\}$ is the local equilibrium of the game \eqref{eq:problem:POSG} if it satisfies the following fixed point equations for some $\gamma^*=\{\gamma_k^*\geq0,\forall k\}$,
\begin{equation}\label{eq:LE}
\begin{array}{lll}
&\nabla_{\Theta_k}\psi_k(\Theta_k^{*},\Theta_{-k}^{*},\gamma_k^*)=\mathbf{0}, & \nabla_{\Theta_k\Theta_k}^2\psi_k(\Theta_k^{*},\Theta_{-k}^{*},\gamma_k^*)\succ \mathbf{0}\nonumber \\
\text{and} & \overline{P}_k(\Theta_k^*,\Theta_{-k}^*)-P_k^0\leq0,& \gamma_k^*\left(\overline{P}_k(\Theta_k^*,\Theta_{-k}^*)-P_k^0\right)=0
\end{array}\quad \forall k,
\end{equation}
{ where $\psi_k(\Theta_k,\Theta_{-k},\gamma_k)=\overline{T}_k(\Theta_k,\Theta_{-k})+ \gamma_k(\overline{P}_k(\Theta_k,\Theta_{-k})-P_k^0)$. ~\hfill\IEEEQED}
\end{Def}
\begin{Rem}[Interpretation of the Local Equilibrium]
$\psi_k(\Theta_k,\Theta_{-k},\gamma_k)$ can be regarded as the {\em Lagrange function} for user $k$ (given the policies of the other users $\Theta_{-k}$) in the non-cooperative POSG problem  \eqref{eq:problem:POSG}. From the Lagrangian theory\cite{ber:book:1999}, a local equilibrium $\Omega^{\Theta^*}=\{\Omega_1^{\Theta_1^*},\cdots,\Omega_K^{\Theta_K^*}\}$ means that given $\Omega_{-k}^{\Theta_{-k}^*}$, $\Omega_{k}^{\Theta_{k}^*}$ is the local optimal solution for the non-cooperative POSG problem \eqref{eq:problem:POSG}. ~\hfill\IEEEQED
\end{Rem}

\begin{Rem}[Comparison between the DEC-POMDP and { Non-Cooperative} POSG Problems]
In Problem \ref{prob:POMDP} (DEC-POMDP), the controller is decentralized at the $K$ transmitters and they have access to the local system state only. Yet, the $K$ controllers are fully cooperative in the sense that they are designed to optimize a common objective function where the per-stage utility is assumed to be known globally through message passing. As a result, they interact in a {\em decentralized cooperative manner}. On the other hand, in the non-cooperative POSG formulation, the $K$ controllers are {\em non-cooperative} in the sense that each controller is interested in optimizing its own delay utility function. Hence, they interact in a {\em decentralized non-cooperative manner}. ~\hfill\IEEEQED
\end{Rem}

Note that the policies $\{\Omega_k,\forall k\}$ are {\em reactive} or {\em memoryless} in that their choice of action is based only upon the current local observation. Furthermore, the DEC-POMDP and the non-cooperative POSG problem are NP-hard \cite{Meuleau:1999}. Instead of targeting at global optimal solutions, we shall derive low complexity iterative algorithms for local optimal solutions in the following sections.


\section{Decentralized Solution for DEC-POMDP}\label{sec:opt_solution}

In this section, we shall propose a decentralized online policy gradient update algorithm to find a local optimal solution for problem \eqref{eq:problem:POMDP}. The proposed solution has low complexity and does not require explicit knowledge of the CSI statistics, random data source statistics as well as the renewable energy statistics.

\subsection{Decentralized Stochastic Policy Gradient Update}
We first define the {\em Lagrangian function} of problem \eqref{eq:problem:POMDP} as
\begin{equation}
\psi(\Theta,\boldsymbol{\gamma})=\sum_k\left(\beta_k\overline{T}_k(\Theta)+ \gamma_k(\overline{P}_k(\Theta)-P_k^0) \right),
\end{equation}
where $\boldsymbol{\gamma}=\{\gamma_k\in\mathbb{R}^+,\forall k\}$ is the LM vector w.r.t. the average power constraint for all the users. The local optimal solution $\Theta^*$ for problem \eqref{eq:problem:POMDP} should satisfy the following first-order necessary conditions given by \cite{ber:book:1999}
\begin{equation}\label{eq:kkt}
\begin{array}{lll}
\nabla_{\Theta}\psi(\Theta^*,\boldsymbol{\gamma}^*) &=& \mathbf{0}\\
\gamma_k^*(\overline{P}_k(\Theta^*)-P_k^0) &=& 0, \forall k
\end{array}
\end{equation}

Define a reference state\footnote{For example, we can set $\{Q_k^I=N_{Q},E_k^I=N_{E},\forall k\}$ without loss of optimality.} $\{\mathbf{Q}^I,\mathbf{E}^I\}=\{\{Q_1^I,\cdots,Q_K^I\},\{E_1^I,\cdots,E_K^I\}\}$ and using perturbation analysis \cite{Cao:2007,Marbach:TAC:2001}, the gradient\footnote{Note that a change of $\Theta$ will affect the function $\psi(\Theta,\boldsymbol{\gamma})$ via the probability measure behind the expectation in $\psi(\Theta,\boldsymbol{\gamma})$ and hence, deriving the gradient is highly non-trivial. } $\nabla_{\Theta}\psi(\Theta,\boldsymbol{\gamma})$ is given in the following lemma.
\begin{Lem}[Gradient of the Lagrangian Function]\label{lem:gradient}
The gradient of the Lagrangian function is given by
\begin{equation}
\begin{array}{lll}
\nabla_{\Theta_k}\psi(\Theta,\boldsymbol{\gamma})=\sum_{\boldsymbol{\chi}}\sum_{\mathbf{P}} \pi(\boldsymbol{\chi};\Theta)\mu_{\boldsymbol{\chi}}(\Theta,\mathbf{P}) \frac{\nabla_{\Theta_k}\mu_{\chi_k}(\Theta_k,\mathbf{P}_k)}{\mu_{\chi_k}(\Theta_k,\mathbf{P}_k)}
q(\boldsymbol{\chi},\mathbf{P};\boldsymbol{\gamma},\Theta)
\end{array}
\end{equation}
where $\pi(\boldsymbol{\chi};\Theta)$ is the steady state probability of state $\boldsymbol{\chi}$ under the policy $\Omega^{\Theta}$, $\mu_{\boldsymbol{\chi}}(\Theta,\mathbf{P})=\prod_k\mu_{\chi_k}(\Theta_k,\mathbf{P}_k)$ is the probability that joint action $\mathbf{P}$ is taken, and $\frac{\nabla_{\Theta_k}\mu_{\chi_k}(\Theta_k,\mathbf{P}_k)}{\mu_{\chi_k}(\Theta_k,\mathbf{P}_k)}\triangleq\mathbf{0}$, if $\mu_{\chi_k}(\Theta_k,\mathbf{P}_k)=0$,
\begin{equation}
q(\boldsymbol{\chi},\mathbf{P};\boldsymbol{\gamma},\Theta)
=\mathbb{E}^{\Omega^{\Theta}}\Big[ \sum_{t=0}^{T^I-1}\big(g_\psi(\boldsymbol{\chi}(t),\mathbf{P}(t))-\psi(\Theta,\boldsymbol{\gamma})\big)|\boldsymbol{\chi}(0)=\boldsymbol{\chi},\mathbf{P}(0)=\mathbf{P}
\Big],
\end{equation}
where $g_\psi(\boldsymbol{\chi},\mathbf{P})=\sum_k \beta_k f(Q_k)+\gamma_k(P_{k,ac}-P_k^0)$. $T^I=\min\{t>0|\mathbf{Q}(t)=\mathbf{Q}^I,\mathbf{E}(t)=\mathbf{E}^I\}$ is the first future time that the reference state $\{\mathbf{Q}^I,\mathbf{E}^I\}$ is visited. ~\hfill\IEEEQED
\end{Lem}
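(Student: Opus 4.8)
The plan is to recognize $\psi(\Theta,\boldsymbol{\gamma})$ as the average per-stage cost of the controlled Markov chain $\{\boldsymbol{\chi}(t)\}$ under the combined cost $g_\psi$, and then invoke the average-cost sensitivity identity from perturbation analysis (à la Marbach--Tsitsiklis). First I would combine the definitions of $\overline{T}_k$ in \eqref{eq:T_single} and $\overline{P}_k$ in \eqref{eq:pwr_con}, and use the ergodicity of the unichain policy $\Omega^{\Theta}$ (so that time averages equal steady-state averages under $\pi(\cdot;\Theta)$) to write
\[
\psi(\Theta,\boldsymbol{\gamma}) = \sum_{\boldsymbol{\chi}} \pi(\boldsymbol{\chi};\Theta) \sum_{\mathbf{P}} \mu_{\boldsymbol{\chi}}(\Theta,\mathbf{P})\, g_\psi(\boldsymbol{\chi},\mathbf{P}),
\]
with $g_\psi(\boldsymbol{\chi},\mathbf{P}) = \sum_k[\beta_k f(Q_k) + \gamma_k(P_{k,ac} - P_k^0)]$. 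The point of this step is that the entire Lagrangian becomes a single average-cost functional whose dependence on $\Theta$ enters only through the stationary law $\pi$ and the action probabilities $\mu_{\boldsymbol{\chi}}$, and where $g_\psi$ itself carries no direct $\Theta$-dependence.

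Next I would introduce the differential (bias) value function $h(\boldsymbol{\chi})$ solving the Poisson equation $h(\boldsymbol{\chi}) = \bar g_\psi(\boldsymbol{\chi}) - \psi + \sum_{\boldsymbol{\chi}'} \Pr(\boldsymbol{\chi}'|\boldsymbol{\chi};\Theta)\, h(\boldsymbol{\chi}')$, where $\bar g_\psi(\boldsymbol{\chi}) = \sum_{\mathbf{P}}\mu_{\boldsymbol{\chi}}(\Theta,\mathbf{P})\, g_\psi(\boldsymbol{\chi},\mathbf{P})$ and $\Pr(\cdot|\cdot;\Theta)$ is the transition kernel \eqref{eq:sys_tran}. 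Fixing the additive degree of freedom by anchoring $h$ at the reference state $\{\mathbf{Q}^I,\mathbf{E}^I\}$ to zero, a standard regenerative-cycle argument identifies $h(\boldsymbol{\chi})$ with the expected accumulated differential cost until the first visit to the reference state; the corresponding one-step-lookahead quantity $g_\psi(\boldsymbol{\chi},\mathbf{P}) - \psi + \sum_{\boldsymbol{\chi}'}\Pr(\boldsymbol{\chi}'|\boldsymbol{\chi},\mathbf{P})\, h(\boldsymbol{\chi}')$ is precisely the first-passage functional $q(\boldsymbol{\chi},\mathbf{P};\boldsymbol{\gamma},\Theta)$ stated in the lemma (with $T^I = \min\{t>0\,|\,\mathbf{Q}(t)=\mathbf{Q}^I,\mathbf{E}(t)=\mathbf{E}^I\}$).

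The key algebraic step is then to differentiate the Poisson equation with respect to $\Theta_k$, multiply through by $\pi(\boldsymbol{\chi};\Theta)$, and sum over $\boldsymbol{\chi}$. Using $\sum_{\boldsymbol{\chi}}\pi = 1$ together with the stationarity identity $\sum_{\boldsymbol{\chi}}\pi(\boldsymbol{\chi})\Pr(\boldsymbol{\chi}'|\boldsymbol{\chi};\Theta) = \pi(\boldsymbol{\chi}')$, the awkward terms $\nabla_{\Theta_k}\pi$ and $\nabla_{\Theta_k}h$ cancel, leaving $\nabla_{\Theta_k}\psi = \sum_{\boldsymbol{\chi}}\pi \sum_{\mathbf{P}} \nabla_{\Theta_k}\mu_{\boldsymbol{\chi}}\,[\,g_\psi + \sum_{\boldsymbol{\chi}'}\Pr(\boldsymbol{\chi}'|\boldsymbol{\chi},\mathbf{P})\, h\,]$, where I have used that both $\nabla_{\Theta_k}\bar g_\psi$ and $\nabla_{\Theta_k}\Pr(\cdot|\cdot;\Theta)$ factor through $\nabla_{\Theta_k}\mu_{\boldsymbol{\chi}}$. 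Since $\sum_{\mathbf{P}}\nabla_{\Theta_k}\mu_{\boldsymbol{\chi}} = \nabla_{\Theta_k}1 = 0$, I can subtract the constant $\psi$ inside the bracket without changing the sum, replacing $g_\psi + \sum \Pr h$ by $q$. Finally, applying the likelihood-ratio identity $\nabla_{\Theta_k}\mu_{\boldsymbol{\chi}} = \mu_{\boldsymbol{\chi}}\,\nabla_{\Theta_k}\mu_{\chi_k}/\mu_{\chi_k}$ — which is valid here precisely because of the product structure $\mu_{\boldsymbol{\chi}} = \prod_k\mu_{\chi_k}$ and the fact that only the $k$-th factor depends on $\Theta_k$ — yields the claimed decentralized gradient formula.

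The main obstacle is the treatment of $\nabla_{\Theta_k}\pi(\boldsymbol{\chi};\Theta)$: the stationary distribution depends on $\Theta$ implicitly and inextricably, so a direct product-rule differentiation leaves an intractable $\nabla_{\Theta_k}\pi$ term. The entire value of the Poisson-equation route is that this term is annihilated by the stationarity identity and never needs to be computed. The remaining care points are routine: justifying the interchange of $\nabla_{\Theta_k}$ with the infinite time-average limit and with the steady-state summation, which follows from the finiteness of the state space ($I_Q = (1+N_{k}^{Q})^K$ data-queue states together with finite energy buffers) and the smoothness of the Gibbs parameterization \eqref{eq:theta_looktable} in $\Theta$; and invoking the unichain/ergodicity assumption to guarantee existence and uniqueness (up to the additive constant fixed by the reference state) of the Poisson solution $h$, which in turn legitimizes its regenerative-cycle representation as $q$.
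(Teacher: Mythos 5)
Your proposal follows essentially the same route as the paper: the average-cost perturbation-analysis (policy-gradient) identity, the Poisson equation for the differential value, the likelihood-ratio factorization through $\nabla_{\Theta_k}\mu_{\chi_k}/\mu_{\chi_k}$ (valid by the product structure $\mu_{\boldsymbol{\chi}}=\prod_k\mu_{\chi_k}$), and the use of $\sum_{\mathbf{P}}\nabla_{\Theta_k}\mu_{\boldsymbol{\chi}}=0$ to insert the $-\psi$ term; the only structural difference is that you re-derive the sensitivity formula that the paper simply cites from the perturbation-analysis literature. One step needs more care than you give it: the reference ``state'' $\{\mathbf{Q}^I,\mathbf{E}^I\}$ pins down only the queue components, so the full-state bias $h(\boldsymbol{\chi})$ cannot literally be anchored to zero there, and the hitting-time representation of $h$ over the \emph{set} $\{\boldsymbol{\chi}:\mathbf{Q}=\mathbf{Q}^I,\mathbf{E}=\mathbf{E}^I\}$ does not by itself solve the full-state Poisson equation. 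The paper closes this gap by passing to $\widetilde V(\mathbf{Q},\mathbf{E})=\mathbb{E}[V(\boldsymbol{\chi})\mid\mathbf{Q},\mathbf{E}]$, which satisfies a reduced Bellman equation in $(\mathbf{Q},\mathbf{E})$ alone (legitimate because $\mathbf{H}(t+1)$ is i.i.d.\ and independent of $(\mathbf{Q}(t+1),\mathbf{E}(t+1))$, so the averaged one-step-lookahead term is unchanged), and for this reduced chain the reference point is a genuine single state whose regenerative representation with $T^I$ is exact; substituting $\widetilde V$ for $V$ in the lookahead term is what actually yields the stated $q$.
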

\begin{proof}
Please refer to Appendix \ref{app:gradient}.
\end{proof}

Note that the brute force solution of \eqref{eq:kkt} requires huge complexity and knowledge of the CSI statistics, random data source statistics as well as the renewable energy statistics. Based on Lemma \ref{lem:gradient}, we shall propose a low complexity decentralized online policy gradient update algorithm to obtain a solution of \eqref{eq:kkt}. Specifically, the key steps for decentralized online learning is given below.
\begin{itemize}
\item{\bf Step 1, Initialization}: Each transmitter initiates the local parameter $\Theta_k$.
\item{\bf Step 2, Per-user Power Allocation}: At the beginning of the
$t$-th frame, each transmitter determines the transmission power allocation according to the policy $\Omega_k^{\Theta_k}$ based on the local system state $\chi_k$, and transmit at the associated achievable data rate given in \eqref{eq:rate}.
\item{\bf Step 3, Message Passing among the $K$ Transmitters}\footnote{ Note that the per-user per-stage utility includes not only the packet buffer states but also the control action. As a result, just broadcasting nodes' buffer states is not enough to replace the per-user per-stage utility. Furthermore, if each user wants to have complete state information, they need to share both the buffer states and the CSI states. As a result, it will cause much information exchanges compared with the per-user per-stage utility sharing. Table \ref{tab:comp} summarizes the communication overhead by exchanging the per-stage utility and sharing the buffer states and the CSI states.}: At the end of the
$t$-th frame, each transmitter shares the per-user per-stage utility $g_{L,k}= \beta_k f(Q_k)+\gamma_k(P_{k,ac}-P_k^0)$ and the reference state indication $\zeta_k$, where $\zeta_k=1$ if $\{Q_k=Q_k^I,E_k=E_k^I\}$, and $\zeta_k=0$ otherwise. 
\item{\bf Step 4, Per-user Parameter $\Theta_k$ Update}:
Based on the current local observation, each of the transmitters updates the
local parameter $\Theta_k$ according to Algorithm
\ref{alg:learning}.
\item{\bf Step 5, Per-user LM Update:}
Based on the current local observation, each of the transmitters updates the
local LMs $\{\gamma_k,\forall k\}$ according to Algorithm
\ref{alg:learning}.
\end{itemize}

Fig. \ref{fig:learning_pomdp} illustrates the above procedure by a flowchart. The detailed algorithm for the local parameters and LMs update in Step 4 and Step 5 is given below:
\begin{Alg}[Online Learning Algorithm for Per-user Parameter and LM]
\label{alg:learning}
Let $\chi_k=\{H_{kk},Q_k,E_k\}$ be the current local system state, $\mathbf{P}_k$ be the current realization
of power allocation, $g_L=\sum_k g_{L,k}$ be the current realization of the per-stage utility and $\zeta=\prod_k\zeta_k$ be the current realization of the reference state indication. The online learning algorithm at the $k$-th transmitter is given by
\begin{equation}
\begin{array}{ll}\label{eq:learning}
\Theta_k^{t+1} &= \Theta_k^{t} -a(t)\left(g_L -\widetilde{L}^t\right)z_k^t   \\
\gamma_k^{t+1}  &=\left[ \gamma_k^{t}+b(t)\left( P_{k,ac} - P_k^0   \right)\right]^+,
\end{array}
\end{equation}
where $\widetilde{L}^{t+1} =\widetilde{L}^{t+1}-a(t)\left( g_L - \widetilde{L}^t  \right)$, and
\begin{equation}\label{eq:zt_pomdp}
z_k^{t+1}=\left\{\begin{array}{lll}
\frac{\nabla_{\Theta_k}\mu_{\chi_k}(\Theta_k^t,\mathbf{P}_k)}{\mu_{\chi_k}(\Theta_k^t,\mathbf{P}_k)} & \text{if } \zeta=1\\
z_k^{t}+\frac{\nabla_{\Theta_k}\mu_{\chi_k}(\Theta_k^t,\mathbf{P}_k)}{\mu_{\chi_k}(\Theta_k^t,\mathbf{P}_k)} & \text{otherwise.}
\end{array}
\right.
\end{equation}
Stepsizes $\{a(t),b(t)\}$ are non-increasing positive scalars satisfying $\sum_{t}a(t)=\sum_{t}b(t)=\infty$, $\sum_{t}(a(t)^2+b(t)^2)<\infty$, $\frac{b(t)}{a(t)}\to 0$. ~\hfill\IEEEQED

\end{Alg}
\begin{Rem}[Feature of the Learning Algorithm \ref{alg:learning}] The learning algorithm only requires local observations only, i.e., local system state $\{H_{kk},Q_k,E_k\}$ at each transmit node, and limited message passing of $\{\zeta_k,g_{L,k}\}$, where the overhead is quite mild\cite{Palomar:distributed:2008}. Both the per-user parameter and the LMs are updated
simultaneously and distributively at each transmitter. { Furthermore, the iteration is online and proceed in the same timescale as the CSI and QSI variations in the learning algorithm. Finally, the solution does not require knowledge
of the CSI distribution or statistics of the arrival process or renewable energy process, i.e., robust to model variations.} ~\hfill\IEEEQED
\end{Rem}

\subsection{Convergence Analysis}\label{sec:pomdp_convergence}
In this section, we shall establish the convergence proof of the
proposed decentralized learning algorithm \ref{alg:learning}. { Since we
have two different stepsize sequences $\{a(t)\}$ and
$\{b(t)\}$ with $b(t) =
o(a(t))$, e.g., $a(t) = \frac{1}{t^{2/3}}$ and $b(t) = \frac{1}{n}$. the per-user parameter updates and the LM
updates are done simultaneously but over two different timescales.
During the per-user parameter update (timescale I), we have
$\gamma_{k}^{t+1}-\gamma_{k}^{t}=O(b(t))=
o(a(t)), \forall k$. Therefore, the LMs appear to be
quasi-static\cite{Borkar:2008} during the per-user parameter update
in \eqref{eq:learning}, and the convergence analysis can be
established over two timescales separately.} We first have the
following lemma.
\begin{Lem}[Convergence of Per-user Parameter Learning (Timescale I)]\label{lem:potential_converge}
The iterations of the per-user parameter
$\Theta^t$ in the proposed learning algorithm
\ref{alg:learning} will converge almost surely to a stationary point,
i.e.,
$\lim_{t\rightarrow\infty}\Theta^t=\Theta^{\infty}(\boldsymbol{\gamma})$, and $\Theta^{\infty}(\boldsymbol{\gamma})$ satisfies
\begin{equation}
\nabla_{\Theta}\psi(\Theta^{\infty}(\boldsymbol{\gamma}),\boldsymbol{\gamma})=\mathbf{0}. 
\end{equation}
\end{Lem}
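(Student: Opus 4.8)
The plan is to cast the per-user $\Theta$-recursion in \eqref{eq:learning} as a Robbins--Monro stochastic approximation scheme and to invoke the ODE method \cite{Borkar:2008,Marbach:TAC:2001}. Since the LM update uses step size $b(t)=o(a(t))$, on Timescale~I the multiplier $\boldsymbol{\gamma}$ is quasi-static, so throughout the argument I fix $\boldsymbol{\gamma}$ and write $\psi(\cdot)=\psi(\cdot,\boldsymbol{\gamma})$. First I would rewrite the update as
\[
\Theta_k^{t+1}=\Theta_k^{t}+a(t)\big(h_k(\Theta^t)+\delta M_k^{t+1}+\varepsilon_k^t\big),
\]
where $h_k(\Theta)=-\nabla_{\Theta_k}\psi(\Theta,\boldsymbol{\gamma})$ is the intended mean field, $\delta M_k^{t+1}$ is a martingale-difference noise adapted to the natural filtration $\mathcal{F}_t$, and $\varepsilon_k^t$ is a bias term to be shown asymptotically negligible. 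The identification of $h_k$ is precisely Lemma~\ref{lem:gradient}: accumulated over one excursion between consecutive visits to the reference state $\{\mathbf{Q}^I,\mathbf{E}^I\}$, the increments $-(g_L-\widetilde{L}^t)z_k^t$ form a sample of $-\sum_{\boldsymbol{\chi},\mathbf{P}}\pi(\boldsymbol{\chi};\Theta)\,\mu_{\boldsymbol{\chi}}(\Theta,\mathbf{P})\,\tfrac{\nabla_{\Theta_k}\mu_{\chi_k}}{\mu_{\chi_k}}\,q(\boldsymbol{\chi},\mathbf{P})$.

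The crux is to verify that the per-cycle drift of the $\Theta$-recursion equals $h_k(\Theta)$ despite the coupled evolution of the baseline $\widetilde{L}^t$ and the eligibility trace $z_k^t$. I would argue this in two parts. \emph{(i) Baseline tracking:} the scalar running-average recursion $\widetilde{L}^{t+1}=\widetilde{L}^{t}+a(t)(g_L-\widetilde{L}^t)$ is itself a stochastic approximation whose mean field drives $\widetilde{L}^t$ toward the stationary average $\psi(\Theta,\boldsymbol{\gamma})$; by ergodicity of the induced unichain (existence of $\pi_{\chi}$) and finite expected recurrence time $\mathbb{E}[T^I]<\infty$, one gets $\widetilde{L}^t-\psi(\Theta^t,\boldsymbol{\gamma})\to 0$ a.s. \emph{(ii) Regenerative unbiasedness:} the trace $z_k^t$ resets at each reference-state visit ($\zeta=1$) and otherwise accumulates $\tfrac{\nabla_{\Theta_k}\mu_{\chi_k}}{\mu_{\chi_k}}$, so summing $(g_L-\widetilde{L}^t)z_k^t$ over a regeneration cycle reproduces $q(\boldsymbol{\chi},\mathbf{P})$ in conditional expectation. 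A key simplification is the baseline identity $\mathbb{E}\big[\tfrac{\nabla_{\Theta_k}\mu_{\chi_k}}{\mu_{\chi_k}}\mid\chi_k\big]=\nabla_{\Theta_k}\sum_{\mathbf{P}_k}\mu_{\chi_k}(\Theta_k,\mathbf{P}_k)=\mathbf{0}$, which shows that any state-independent baseline—in particular the imperfect estimate $\widetilde{L}^t$—leaves the mean field unchanged and affects only the noise variance; hence the tracking error from~(i) contributes only to $\varepsilon_k^t$ and vanishes.

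Next I would check the remaining regularity conditions. Because the state space $\boldsymbol{\chi}=\{\mathbf{H},\mathbf{Q},\mathbf{E}\}$ is finite, the per-stage utility $g_L$ is bounded; and under the softmax parameterization \eqref{eq:theta_looktable} the score $\tfrac{\nabla_{\Theta_k}\mu_{\chi_k}}{\mu_{\chi_k}}$ is uniformly bounded. This yields (a) bounded conditional variance of $\delta M_k^{t+1}$, so that $\sum_t a(t)^2<\infty$ makes $\sum_t a(t)\,\delta M_k^{t+1}$ an a.s.\ convergent martingale, and (b) Lipschitz continuity of $h(\cdot)$, since $\pi(\boldsymbol{\chi};\Theta)$, $\mu_{\boldsymbol{\chi}}(\Theta,\cdot)$ and $q(\cdot;\Theta)$ are smooth in $\Theta$ on the finite ergodic chain. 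Boundedness of the iterates, $\sup_t\|\Theta^t\|<\infty$ a.s., I would obtain from the stabilizing pull of the gradient flow (or adopt it as the standard mild condition used in \cite{Marbach:TAC:2001}).

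Finally I would apply the ODE method: with $\sum_t a(t)=\infty$ and $\sum_t a(t)^2<\infty$, the interpolated iterates asymptotically track the trajectories of $\dot{\Theta}=-\nabla_{\Theta}\psi(\Theta,\boldsymbol{\gamma})$. Along this flow $\tfrac{d}{dt}\psi=-\|\nabla_{\Theta}\psi\|^2\le 0$, so $\psi$ is a Lyapunov function and, being bounded below, its invariant set reduces to $\{\Theta:\nabla_{\Theta}\psi=\mathbf{0}\}$; by standard stochastic-approximation convergence results \cite{Borkar:2008} the iterates converge a.s.\ to this set, giving $\lim_{t\to\infty}\Theta^t=\Theta^{\infty}(\boldsymbol{\gamma})$ with $\nabla_{\Theta}\psi(\Theta^{\infty}(\boldsymbol{\gamma}),\boldsymbol{\gamma})=\mathbf{0}$. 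The main obstacle I anticipate is part~(ii): rigorously connecting the per-frame eligibility-trace updates to the per-cycle gradient expression of Lemma~\ref{lem:gradient} while both $\widetilde{L}^t$ and $\Theta^t$ drift on the same timescale $a(t)$. Controlling the resulting bias $\varepsilon_k^t$ requires the finite-recurrence-time and bounded-score estimates above, together with a careful telescoping of the increments over regeneration cycles.
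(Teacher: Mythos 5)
Your proposal takes essentially the same route as the paper's proof: both recast the $\Theta$-update as a stochastic approximation of the gradient flow $\dot{\Theta}=-\nabla_{\Theta}\psi(\Theta,\boldsymbol{\gamma})$ with $\boldsymbol{\gamma}$ quasi-static, use the regenerative structure of visits to $\{\mathbf{Q}^I,\mathbf{E}^I\}$ to identify the per-cycle drift with the gradient expression of Lemma~\ref{lem:gradient}, show that $\widetilde{L}^t$ tracks $\psi(\Theta^t,\boldsymbol{\gamma})$, and conclude via a gradient-method-with-diminishing-errors / ODE argument. The paper does this on the subsampled sequence $\{t_m\}$ of regeneration times with effective stepsizes $\widetilde{a}(m)=\sum_{t=t_m}^{t_{m+1}-1}a(t)$ (verifying $\sum_m\widetilde{a}(m)=\infty$ and $\sum_m[\widetilde{a}(m)]^2<\infty$ a.s.) and invokes \cite{Marbach:TAC:2001} for the common limit of $\psi(\Theta^{t_m})$ and $\widetilde{L}^{t_m}$, but this is the same argument in different dress. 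One caveat: your ``key simplification'' --- that the identity $\mathbb{E}\big[\nabla_{\Theta_k}\mu_{\chi_k}/\mu_{\chi_k}\mid\chi_k\big]=\mathbf{0}$ makes any state-independent baseline leave the mean field unchanged --- fails for the cycle-accumulated eligibility trace: the per-cycle expectation of $\widetilde{L}^t z_k^t$ is proportional to $W_k(\Theta)=\mathbb{E}\big[\sum_{t=t_m}^{t_{m+1}-1}(t_{m+1}-t)\,\nabla_{\Theta_k}\mu_{\chi_k}/\mu_{\chi_k}\big]$, which is generally nonzero because the weight $(t_{m+1}-t)$ depends on the action taken at time $t$. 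The paper keeps this term explicitly as $-(\psi(\Theta^{t_m})-\widetilde{L}^{t_m})W(\Theta^{t_m})/\mathbb{E}^{\Omega^{\Theta^{t_m}}}[T^I]$ in the drift $h(r^{t_m})$, and it vanishes only because $\widetilde{L}^{t_m}-\psi(\Theta^{t_m})\to 0$; your part (i) establishes exactly that, so your conclusion stands, but the justification must run through the tracking argument rather than the exact-unbiasedness claim.
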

\begin{proof}
Please refer to Appendix \ref{app:potential_converge}.
\end{proof}

On the other hand,
during the LM update (timescale II), we have
$\lim_{t\to\infty}|\Theta^t-\Theta^{\infty}(\boldsymbol{\gamma}^t)|=0$
almost surely. Hence, during the LM update in
\eqref{eq:learning}, the per-user parameter is seen as almost
equilibrated. The convergence of the LMs is summarized below.

\begin{Lem}[Convergence of LM over Timescale II]\label{lem:lm_converge} The iterations of the LMs $\lim_{t\to \infty} \boldsymbol
\gamma^t=\boldsymbol \gamma^{\infty}$ almost surely, where $\boldsymbol
\gamma^{\infty}$ satisfies the power constraints of all the users in
\eqref{eq:pwr_con}. 
~ \hfill\QED
\end{Lem}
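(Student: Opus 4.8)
The plan is to treat the LM recursion in \eqref{eq:learning} as the \emph{slow} component of a two-timescale stochastic approximation and to invoke the standard ODE machinery of \cite{Borkar:2008}. Because $b(t)=o(a(t))$, Lemma \ref{lem:potential_converge} already guarantees that the fast iterate tracks its equilibrium, $\lim_{t\to\infty}|\Theta^t-\Theta^\infty(\boldsymbol\gamma^t)|=0$ almost surely, so on the slow timescale I may regard the parameter as equilibrated and substitute $\Theta^t\approx\Theta^\infty(\boldsymbol\gamma^t)$ up to an asymptotically negligible error. The first step is therefore to rewrite the LM update as a projected Robbins--Monro recursion driven by a mean field plus vanishing perturbations.

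Concretely, I would write $\gamma_k^{t+1}=\big[\gamma_k^t+b(t)\big(h_k(\boldsymbol\gamma^t)+\delta M_k^{t+1}+\varepsilon_k^t\big)\big]^+$, where the mean field is $h_k(\boldsymbol\gamma)=\overline P_k(\Theta^\infty(\boldsymbol\gamma))-P_k^0$, the Markov/martingale-difference noise is $\delta M_k^{t+1}=P_{k,ac}(t)-\mathbb{E}^{\Omega^{\Theta^\infty(\boldsymbol\gamma^t)}}[P_{k,ac}]$, and $\varepsilon_k^t$ collects the tracking error $\overline P_k(\Theta^t)-\overline P_k(\Theta^\infty(\boldsymbol\gamma^t))$. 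Since the action spaces $\mathcal{A}_{ac},\mathcal{A}_e$ are finite, $P_{k,ac}$ is bounded, so $\{\delta M_k^{t+1}\}$ has bounded conditional second moments; together with $\sum_t b(t)^2<\infty$ this makes the noise contribution vanish almost surely, while $\varepsilon_k^t\to0$ by Lemma \ref{lem:potential_converge} and continuity of $\overline P_k(\cdot)$. Averaging the Markov noise over the ergodic stationary measure $\pi_{\chi}$ under $\Omega^{\Theta^\infty(\boldsymbol\gamma)}$ is precisely what converts the single-sample increment $P_{k,ac}(t)-P_k^0$ into $h_k$. By the Kushner--Clark theorem the recursion then converges to the limit set of the projected ODE $\dot{\boldsymbol\gamma}=\hat\Gamma\big(\overline{\mathbf P}(\Theta^\infty(\boldsymbol\gamma))-\mathbf P^0\big)$, where the operator $\hat\Gamma$ reflects the dynamics at the boundary $\gamma_k=0$ to enforce $\boldsymbol\gamma\succeq\mathbf 0$.

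The observation that drives convergence is that this ODE is \emph{projected gradient ascent} on the local dual function $d(\boldsymbol\gamma)\triangleq\psi(\Theta^\infty(\boldsymbol\gamma),\boldsymbol\gamma)$. By an envelope (Danskin) argument the chain-rule contribution $(\partial\Theta^\infty/\partial\boldsymbol\gamma)^{\!\top}\nabla_{\Theta}\psi$ drops out because $\nabla_{\Theta}\psi(\Theta^\infty(\boldsymbol\gamma),\boldsymbol\gamma)=\mathbf 0$ by Lemma \ref{lem:potential_converge}, leaving $\partial_{\gamma_k}d(\boldsymbol\gamma)=\overline P_k(\Theta^\infty(\boldsymbol\gamma))-P_k^0=h_k(\boldsymbol\gamma)$. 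Hence $d$ is non-decreasing along the projected trajectories and serves as a Lyapunov function; by LaSalle's invariance principle the ODE converges to the set of its constrained stationary (KKT) points. Since $\psi$ is affine in $\boldsymbol\gamma$ for fixed $\Theta$, $d$ is concave, so this set reduces to the constrained maximizers and $\boldsymbol\gamma^\infty=\lim_t\boldsymbol\gamma^t$ is identified almost surely.

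Finally I would read off constraint satisfaction from stationarity of the projected ODE at $\boldsymbol\gamma^\infty$: for each $k$, either $\gamma_k^\infty>0$, in which case interior stationarity $h_k(\boldsymbol\gamma^\infty)=0$ forces $\overline P_k(\Theta^\infty(\boldsymbol\gamma^\infty))=P_k^0$, or $\gamma_k^\infty=0$, in which case the boundary reflection requires $h_k(\boldsymbol\gamma^\infty)\le0$, i.e.\ $\overline P_k(\Theta^\infty(\boldsymbol\gamma^\infty))\le P_k^0$. In both cases the average AC-power constraint \eqref{eq:pwr_con} holds, and complementary slackness $\gamma_k^\infty(\overline P_k-P_k^0)=0$ matches the KKT system \eqref{eq:kkt}. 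The main obstacle I anticipate is not the algebra but the dynamical-systems bookkeeping: establishing a priori boundedness of $\{\boldsymbol\gamma^t\}$ so that a bounded invariant set exists, verifying that $\Theta^\infty(\cdot)$ is continuous enough for $\varepsilon_k^t\to0$ and for $d$ (a \emph{local} dual, since $\Theta^\infty$ is only a stationary point) to be well defined and concave near the equilibrium, and handling the Markov rather than i.i.d.\ structure of the noise through ergodicity of $\{\boldsymbol\chi(t)\}$. Boundedness can be argued from the stabilizing feedback: a large $\gamma_k$ heavily penalizes AC power in $\psi$, driving $\overline P_k(\Theta^\infty(\boldsymbol\gamma))$ below $P_k^0$ and hence $h_k<0$, which pulls $\gamma_k$ back.
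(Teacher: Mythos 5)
Your proposal is correct and follows essentially the same route as the paper's proof: both rewrite the LM recursion as a slow-timescale stochastic approximation with martingale-difference noise $P_{k,ac}-\overline{P}_k(\Theta^*(\boldsymbol\gamma))$, pass to the ODE $\dot\gamma_k=\overline{P}_k(\Theta^*(\boldsymbol\gamma))-P_k^0$, identify it as gradient ascent on the dual $G(\boldsymbol\gamma)=\psi(\Theta^*(\boldsymbol\gamma),\boldsymbol\gamma)$ via the envelope argument $\nabla_\Theta\psi(\Theta^*(\boldsymbol\gamma),\boldsymbol\gamma)=\mathbf 0$, and conclude by the same interior/boundary case split. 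Your additional remarks on boundedness, continuity of $\Theta^\infty(\cdot)$, and the merely local nature of the dual are legitimate gaps that the paper's own proof also glosses over, so they do not constitute a divergence from its argument.
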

\begin{proof}
Please refer to Appendix \ref{app:lm_converge}.
\end{proof}

Based on the above lemmas, we can summarize the convergence
performance of the proposed
learning algorithm in the following theorem.
\begin{Thm}[Convergence of Online Learning Algorithm \ref{alg:learning}]
In the learning algorithm \ref{alg:learning}, we have $(\Theta^{t},\boldsymbol{\gamma}^t)\to(\Theta^{\infty},\boldsymbol{\gamma}^{\infty})$ almost surely, where $\Theta^{\infty}$ and $\boldsymbol{\gamma}^{\infty}$ satisfy the KKT condition given in \eqref{eq:kkt}, i.e.,
\begin{equation}
\nabla_{\Theta}\psi(\Theta^{\infty},\gamma^{\infty})=\mathbf{0}, \gamma_k^{\infty}(\overline{P}_k(\Theta^{\infty})-P_k^0) = 0 
\end{equation}
and the power constraints of all the users in \eqref{eq:pwr_con}. Furthermore, if $\nabla_{\Theta\Theta}^2\psi(\Theta^{\infty},\gamma^{\infty})\succ\mathbf{0}$ (positive definite matrix), then $\Theta^{\infty}$ is a local optimal solution for the constrained DEC-POMDP problem in \eqref{eq:problem:POMDP}. ~ \hfill\IEEEQED
\end{Thm}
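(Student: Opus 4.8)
The plan is to assemble the two single-timescale results, Lemma~\ref{lem:potential_converge} and Lemma~\ref{lem:lm_converge}, into a joint convergence statement via the two-timescale stochastic approximation framework of \cite{Borkar:2008}, and then to verify that the common limit satisfies the KKT conditions in \eqref{eq:kkt}. First I would invoke the timescale separation induced by $b(t)=o(a(t))$: the parameter recursion in \eqref{eq:learning} evolves on the fast timescale while the LM recursion evolves on the slow timescale. On the fast timescale the LMs $\boldsymbol{\gamma}^t$ are quasi-static, so Lemma~\ref{lem:potential_converge} yields $\Theta^t\to\Theta^{\infty}(\boldsymbol{\gamma})$ a.s.\ with $\nabla_{\Theta}\psi(\Theta^{\infty}(\boldsymbol{\gamma}),\boldsymbol{\gamma})=\mathbf{0}$ for the frozen $\boldsymbol{\gamma}$. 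On the slow timescale $\Theta^t$ tracks $\Theta^{\infty}(\boldsymbol{\gamma}^t)$, i.e.\ $\lim_{t\to\infty}|\Theta^t-\Theta^{\infty}(\boldsymbol{\gamma}^t)|=0$ a.s., so the LM update sees the parameter as equilibrated and Lemma~\ref{lem:lm_converge} gives $\boldsymbol{\gamma}^t\to\boldsymbol{\gamma}^{\infty}$ a.s. Composing these, $(\Theta^t,\boldsymbol{\gamma}^t)\to(\Theta^{\infty}(\boldsymbol{\gamma}^{\infty}),\boldsymbol{\gamma}^{\infty})\triangleq(\Theta^{\infty},\boldsymbol{\gamma}^{\infty})$ a.s.

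Next I would verify the KKT conditions at the joint limit. The stationarity condition $\nabla_{\Theta}\psi(\Theta^{\infty},\boldsymbol{\gamma}^{\infty})=\mathbf{0}$ follows from Lemma~\ref{lem:potential_converge} evaluated at $\boldsymbol{\gamma}=\boldsymbol{\gamma}^{\infty}$, using continuity of $\nabla_{\Theta}\psi$ in $\boldsymbol{\gamma}$. For complementary slackness I would analyze the fixed point of the projected recursion $\gamma_k^{t+1}=[\gamma_k^t+b(t)(P_{k,ac}-P_k^0)]^+$. At equilibrium the mean drift of $\gamma_k$ is proportional to $\overline{P}_k(\Theta^{\infty})-P_k^0$; at a stable fixed point of the associated projected ODE, either $\gamma_k^{\infty}>0$ forces this drift to vanish, giving $\overline{P}_k(\Theta^{\infty})=P_k^0$, or $\gamma_k^{\infty}=0$ when the constraint is strictly inactive. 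In both cases $\gamma_k^{\infty}(\overline{P}_k(\Theta^{\infty})-P_k^0)=0$, while feasibility $\overline{P}_k(\Theta^{\infty})\le P_k^0$ and nonnegativity $\gamma_k^{\infty}\ge0$ are guaranteed by the projection and by Lemma~\ref{lem:lm_converge}.

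Finally, the local-optimality claim is a direct application of the second-order sufficient conditions for constrained optimization \cite{ber:book:1999}: a feasible $\Theta^{\infty}$ together with multipliers $\boldsymbol{\gamma}^{\infty}\ge0$ satisfying the first-order conditions \eqref{eq:kkt}, augmented by the positive definiteness $\nabla_{\Theta\Theta}^2\psi(\Theta^{\infty},\boldsymbol{\gamma}^{\infty})\succ\mathbf{0}$, is a strict local minimizer of the constrained DEC-POMDP problem \eqref{eq:problem:POMDP}. This step is essentially bookkeeping once the KKT point has been located.

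I expect the main obstacle to be the rigorous justification of the timescale separation under partial observation, in particular the fast-timescale tracking $|\Theta^t-\Theta^{\infty}(\boldsymbol{\gamma}^t)|\to0$. This requires that the map $\boldsymbol{\gamma}\mapsto\Theta^{\infty}(\boldsymbol{\gamma})$ be well defined and regular, which hinges on the fast ODE having a suitably attracting equilibrium for each frozen $\boldsymbol{\gamma}$ despite the noisy, regeneration-based gradient estimator $z_k^t$ in \eqref{eq:zt_pomdp}. Controlling the martingale-difference noise, the bias associated with the finite first-return time $T^I$, and the boundedness of the iterates is the delicate part; however, these are precisely the issues already absorbed by Lemma~\ref{lem:potential_converge} and Lemma~\ref{lem:lm_converge}, so the proof of the theorem reduces to invoking the standard two-timescale convergence argument of \cite{Borkar:2008} and then checking the KKT and second-order conditions as above.
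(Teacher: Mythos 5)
Your proposal is correct and follows essentially the same route as the paper: the paper gives no separate proof of the theorem but obtains it exactly as you do, by combining Lemma~\ref{lem:potential_converge} (fast timescale, quasi-static LMs) and Lemma~\ref{lem:lm_converge} (slow timescale, tracking $\Theta^{\infty}(\boldsymbol{\gamma}^t)$) under the two-timescale stochastic approximation framework of \cite{Borkar:2008}, with complementary slackness read off from the fixed points of the projected LM ODE and local optimality from the standard second-order sufficiency condition in \cite{ber:book:1999}. The technical burdens you flag (regeneration-based gradient estimator, noise control, attractivity of the fast ODE equilibrium) are indeed the ones the paper absorbs into the proofs of the two lemmas rather than into the theorem itself.
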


Note that $\nabla_{\Theta\Theta}^2\psi(\Theta^{\infty},\gamma^{\infty})\succ\mathbf{0}$ is a very mild condition that is usually satisfied \cite{Borkar:2008}.


\section{Decentralized Solution for { Non-Cooperative} POSG Problem}
In this section, we shall propose a decentralized online policy gradient update algorithm to find a
local equilibrium of the { non-cooperative} POSG problem. The proposed solution also has low complexity and does not require explicit knowledge of the CSI statistics, random data source statistics as well as the renewable energy
statistics.

\subsection{Decentralized Stochastic Policy Gradient Update}

From \eqref{eq:LE}, the Lagrangian function for user $k$ is given by
\begin{equation}\label{eq:lm_fun_posg}
\psi_k(\Theta_k,\Theta_{-k},\boldsymbol{\gamma})=\beta_k\overline{T}_k(\Theta_k,\Theta_{-k})+ \gamma_k(\overline{P}_k(\Theta_k,\Theta_{-k})-P_k^0),
\end{equation}
where $\gamma_k\in\mathbb{R}^+$ is the LM w.r.t. the average power constraint for user $k$.
Following similar perturbation analysis as in Lemma \ref{lem:gradient}, the gradient $\nabla_{\Theta_k}\psi_k(\Theta_k,\Theta_{-k},\gamma_k)$ is given in the following lemma.
\begin{Lem}[Gradient of the Lagrangian Function]\label{lem:gradient_posg}
The gradient of the Lagrangian function in \eqref{eq:lm_fun_posg} is given by
\begin{equation}
\begin{array}{lll}
\nabla_{\Theta_k}\psi_k(\Theta_k,\Theta_{-k},\gamma_k)=\sum_{\boldsymbol{\chi}}\sum_{\mathbf{P}} \pi(\boldsymbol{\chi};\Theta)\mu_{\boldsymbol{\chi}}(\Theta,\mathbf{P}) \frac{\nabla_{\Theta_k}\mu_{\chi_k}(\Theta_k,\mathbf{P}_k)}{\mu_{\chi_k}(\Theta_k,\mathbf{P}_k)}
q_k(\boldsymbol{\chi},\mathbf{P};\gamma_k,\Theta),
\end{array}
\end{equation}
where
\begin{equation}
q_k(\boldsymbol{\chi},\mathbf{P};\gamma_k,\Theta)
=\mathbb{E}^{\Omega^{\Theta}}\Big[ \sum_{t=0}^{T^I-1}\big(f(Q_k(t))+\gamma_k(P_{k,ac}(t)-P_k^0)-\psi_k(\Theta_k,\Theta_{-k},\gamma_k)\big)|\boldsymbol{\chi}(0)=\boldsymbol{\chi},\mathbf{P}(0)=\mathbf{P}
\Big].
\end{equation}
 ~\hfill\IEEEQED
\end{Lem}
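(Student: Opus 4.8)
The plan is to mirror the perturbation (sensitivity) analysis already carried out for Lemma \ref{lem:gradient}, exploiting the observation that once the competitors' parameters $\Theta_{-k}$ are frozen, the problem faced by user $k$ is an ordinary average-cost controlled Markov chain on the joint state $\boldsymbol{\chi}$: the transition kernel is still the one induced by $\Omega^{\Theta}$ in \eqref{eq:sys_tran}--\eqref{eq:Q_tran}, and the only change relative to Lemma \ref{lem:gradient} is that the common per-stage utility $g_\psi$ is replaced by user $k$'s individual cost $g_{\psi,k}(\boldsymbol{\chi},\mathbf{P})=f(Q_k)+\gamma_k(P_{k,ac}-P_k^0)$. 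First I would write $\psi_k$ in steady-state form, $\psi_k(\Theta_k,\Theta_{-k},\gamma_k)=\sum_{\boldsymbol{\chi}}\pi(\boldsymbol{\chi};\Theta)\sum_{\mathbf{P}}\mu_{\boldsymbol{\chi}}(\Theta,\mathbf{P})\,g_{\psi,k}(\boldsymbol{\chi},\mathbf{P})$, which is legitimate because, under the unichain assumption, the $\limsup$ time-average in \eqref{eq:problem:POSG} coincides with the ensemble average under the unique invariant measure $\pi(\cdot;\Theta)$.

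Next I introduce the differential cost (bias) function $h_k(\boldsymbol{\chi})$ solving the Poisson equation $(I-P(\Theta))h_k=\bar{g}_k-\psi_k\mathbf{1}$, where $\bar{g}_k(\boldsymbol{\chi})=\sum_{\mathbf{P}}\mu_{\boldsymbol{\chi}}g_{\psi,k}$ and $P_{\boldsymbol{\chi}\boldsymbol{\chi}'}(\Theta)=\sum_{\mathbf{P}}\mu_{\boldsymbol{\chi}}(\Theta,\mathbf{P})\Pr\{\boldsymbol{\chi}'\,|\,\boldsymbol{\chi},\mathbf{P}\}$. Differentiating $\psi_k=\pi\bar{g}_k$ gives $\nabla_{\Theta_k}\psi_k=(\nabla_{\Theta_k}\pi)\bar{g}_k+\pi\,\nabla_{\Theta_k}\bar{g}_k$. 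Differentiating the invariance relation $\pi=\pi P$ yields $\nabla_{\Theta_k}\pi\,(I-P)=\pi\,\nabla_{\Theta_k}P$; post-multiplying by $h_k$ and using $\nabla_{\Theta_k}\pi\cdot\mathbf{1}=0$ (from $\pi\mathbf{1}=1$) eliminates the $\nabla_{\Theta_k}\pi$ term and leaves $\nabla_{\Theta_k}\psi_k=\pi(\nabla_{\Theta_k}P)h_k+\pi\,\nabla_{\Theta_k}\bar{g}_k$. Substituting the expressions for $\nabla_{\Theta_k}P$ and $\nabla_{\Theta_k}\bar{g}_k$ and collecting the common factor $\nabla_{\Theta_k}\mu_{\boldsymbol{\chi}}$ produces $\nabla_{\Theta_k}\psi_k=\sum_{\boldsymbol{\chi}}\pi(\boldsymbol{\chi})\sum_{\mathbf{P}}\nabla_{\Theta_k}\mu_{\boldsymbol{\chi}}(\Theta,\mathbf{P})\big[g_{\psi,k}(\boldsymbol{\chi},\mathbf{P})+\sum_{\boldsymbol{\chi}'}\Pr\{\boldsymbol{\chi}'\,|\,\boldsymbol{\chi},\mathbf{P}\}h_k(\boldsymbol{\chi}')\big]$, where the bracketed quantity is exactly the state-action differential value; note the additive constant $\psi_k$ may be inserted at will since $\sum_{\mathbf{P}}\nabla_{\Theta_k}\mu_{\boldsymbol{\chi}}=\nabla_{\Theta_k}(1)=\mathbf{0}$.

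Two final steps complete the identification with the stated form. First, the product structure $\mu_{\boldsymbol{\chi}}(\Theta,\mathbf{P})=\prod_q\mu_{\chi_q}(\Theta_q,\mathbf{P}_q)$ means only the $k$-th factor depends on $\Theta_k$, so $\nabla_{\Theta_k}\mu_{\boldsymbol{\chi}}=\mu_{\boldsymbol{\chi}}\frac{\nabla_{\Theta_k}\mu_{\chi_k}}{\mu_{\chi_k}}$; this yields the decentralized score-function factor appearing in the statement and is the reason the gradient depends on user $k$'s policy alone. Second, I would replace the bias function by its regenerative representation anchored at the reference state, $h_k(\boldsymbol{\chi})=\mathbb{E}^{\Omega^{\Theta}}\big[\sum_{t=0}^{T^I-1}(g_{\psi,k}(\boldsymbol{\chi}(t),\mathbf{P}(t))-\psi_k)\,|\,\boldsymbol{\chi}(0)=\boldsymbol{\chi}\big]$; combining this with the one-step look-ahead inside the bracket collapses the expression into $q_k(\boldsymbol{\chi},\mathbf{P};\gamma_k,\Theta)$ with the first-passage horizon $T^I=\min\{t>0\,|\,\mathbf{Q}(t)=\mathbf{Q}^I,\mathbf{E}(t)=\mathbf{E}^I\}$, exactly as claimed.

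The routine part is the algebra above; the genuine obstacle is the regularity justification shared with Lemma \ref{lem:gradient}, namely that $\pi(\cdot;\Theta)$ is differentiable and that term-by-term differentiation of the infinite-horizon average is valid, together with the first-passage representation of $h_k$. The latter requires recurrence of the reference state, uniqueness of the Poisson solution up to an additive constant (fixed by normalizing $h_k=0$ at the reference state), and finiteness of $\mathbb{E}[T^I]$. All of these follow from the unichain/ergodicity assumption precisely as in the cooperative case, so the only genuinely new point to verify is that freezing $\Theta_{-k}$ indeed leaves a well-posed single-agent average-cost MDP for user $k$, after which the entire machinery of Lemma \ref{lem:gradient} transfers verbatim with $g_\psi$ replaced by $g_{\psi,k}$.
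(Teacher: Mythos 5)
Your proposal is correct and follows essentially the same route as the paper, which proves this lemma simply by invoking the perturbation analysis of Lemma \ref{lem:gradient} (Appendix A) with the joint per-stage utility $g_\psi$ replaced by user $k$'s individual cost $f(Q_k)+\gamma_k(P_{k,ac}-P_k^0)$: Poisson equation, score-function rewriting via $\sum_{\mathbf{P}}\nabla_{\Theta_k}\mu_{\boldsymbol{\chi}}=\mathbf{0}$, the product structure of $\mu_{\boldsymbol{\chi}}$, and the regenerative (first-passage to the reference state) representation of the differential cost. The only cosmetic difference is that you derive the starting gradient formula by differentiating the invariance relation $\pi=\pi P$ rather than citing it, and the paper is slightly more explicit that the first-passage representation is justified by passing to the aggregated value function $\widetilde{V}(\mathbf{Q},\mathbf{E})$ (conditioning out the i.i.d. CSI), a point your regularity discussion covers implicitly.
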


Based on the Lemma \ref{lem:gradient_posg}, we shall propose a low complexity decentralized online policy gradient update algorithm to obtain a local equilibrium. Specifically, the key steps for decentralized online learning is given below.
\begin{itemize}
\item{\bf Step 1, Initialization}: Each transmitter initiates the local parameter $\Theta_k$.
\item{\bf Step 2, Per-user Power Allocation}: At the beginning of the
$t$-th frame, each transmitter determines the transmission power allocation according to the policy $\Omega_k^{\Theta_k}$ based on the local system state $\chi_k$, and transmit at the associated achievable data rate given in \eqref{eq:rate}.
\item{\bf Step 3, Message Passing among the $K$ Transmitters}: At the end of the
$t$-th frame, each transmitter shares the one bit reference state indication $\zeta_k$, where $\zeta_k=1$ if $\{Q_k=Q_k^I,E_k=E_k^I\}$, and $\zeta_k=0$ otherwise.
\item{\bf Step 4, Per-user Parameter $\Theta_k$ Update}:
Based on the current local observation, each of the transmitters updates the
local parameter $\Theta_k$ according to Algorithm \ref{alg:learning_posg}.
\item{\bf Step 5, Per-user LM Update:}
Based on the current local observation, each of the transmitters updates the
local LMs $\{\gamma_k,\forall k\}$ according to Algorithm \ref{alg:learning_posg}.
\end{itemize}

Fig. \ref{fig:learning_posg} illustrates the above procedure by a flowchart. The detailed algorithm for the local parameters and LMs update in Step 4 and Step 5 is given below:
\begin{Alg}[Online Learning Algorithm for Per-user Parameter and LM]
\label{alg:learning_posg}
Let $\chi_k=\{H_{kk},Q_k,E_k\}$ be the current local system state, $\mathbf{P}_k$ be the current realization
of power allocation, $\zeta=\prod_k\zeta_k$ be the current realization of the reference state indication. The online learning algorithm at the $k$-th transmitter is given by
\begin{equation}
\begin{array}{ll}\label{eq:learning_posg}
\Theta_k^{t+1} &= \Theta_k^{t} -a(t)\left( f_k(Q_k)+\gamma_k^t(P_{k,ac}-P_k^0) -\widetilde{L}_k^t\right)z_k^t   \\
\gamma_k^{t+1}  &=\left[ \gamma_k^{t}+b(t)\left( P_{k,ac} - P_k^0   \right)\right]^+,
\end{array}
\end{equation}
where $\widetilde{L}_k^{t+1} =\widetilde{L}_k^{t}-a(t)\left( f_k(Q_k)+\gamma_k^t(P_{k,ac}-P_k^0) - \widetilde{L}_k^t  \right)$, and
\begin{equation}
z_k^{t+1}=\left\{\begin{array}{lll}
\frac{\nabla_{\Theta_k}\mu_{\chi_k}(\Theta_k^t,\mathbf{P}_k)}{\mu_{\chi_k}(\Theta_k^t,\mathbf{P}_k)} & \text{if } \zeta=1\\
z_k^t+\frac{\nabla_{\Theta_k}\mu_{\chi_k}(\Theta_k^t,\mathbf{P}_k)}{\mu_{\chi_k}(\Theta_k^t,\mathbf{P}_k)} & \text{otherwise.}
\end{array}
\right.
\end{equation}

 ~\hfill\IEEEQED

\end{Alg}
\begin{Rem}[Features of the Learning Algorithm \ref{alg:learning_posg}] The learning algorithm only requires local observations, i.e., local system state $\{H_{kk},Q_k,E_k\}$ at each transmit node, and one bit message passing of $\zeta_k$. Both the per-user parameter and the LMs are updated simultaneously and distributively at each transmitter. { Furthermore, the iteration is online and proceed in the same timescale as the CSI and QSI variations in the learning algorithm. Finally, the solution does not require knowledge
of the CSI distribution or statistics of the arrival process or renewable energy process, i.e., robust to model variations.}   ~\hfill\IEEEQED
\end{Rem}

\subsection{Convergence Analysis}
In this section, we shall establish the convergence proof of the
proposed decentralized learning algorithm \ref{alg:learning_posg}. Specifically, let $\eta=\max_{k,n\neq k}\frac{L_{kn}}{L_{kk}}$, and let $\mathcal{F}^*=\{\Theta^*\}$ be the set of the local equilibrium of the game \eqref{eq:problem:POSG}, i.e., $\Theta^*$ satisfies the fixed point equations in \eqref{eq:LE}. The convergence
performance of the proposed learning algorithm is given in the following theorem.
\begin{Thm}[Convergence of Online Learning Algorithm \ref{alg:learning_posg}]\label{thm:posg}
Suppose $\mathcal{F}^*$ is not empty. The iterations of the per-user parameter
$\Theta^t$ in the proposed learning algorithm \ref{alg:learning_posg} will converge almost surely to an invariant set given by
\begin{equation}\label{eq:set_posg}
\mathcal{S}_{\theta}\triangleq\{ \Theta: ||\Theta - \Theta^*||-\delta\leq0\}
\end{equation}
as $t\to\infty$, for some positive constant $\delta=O(\eta^2)$ and some $\Theta^*\in\mathcal{F}^*$. ~\hfill\IEEEQED
\end{Thm}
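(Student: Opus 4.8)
The plan is to read Algorithm~\ref{alg:learning_posg} as a two-timescale stochastic approximation driven by a noisy estimate of the game's \emph{pseudo-gradient}, and to localize its asymptotic behaviour through an associated limiting ODE. First I would establish that the per-user parameter recursion is a consistent online estimate of the pseudo-gradient. By Lemma~\ref{lem:gradient_posg}, the eligibility trace $z_k^t$ (reset whenever the global reference state is visited, $\zeta=1$) together with the running baseline $\widetilde L_k^t$ realizes the regenerative estimator of $\nabla_{\Theta_k}\psi_k$. I would verify $\widetilde L_k^t\to\psi_k$ and decompose the increment $\big(f_k(Q_k)+\gamma_k^t(P_{k,ac}-P_k^0)-\widetilde L_k^t\big)z_k^t=\nabla_{\Theta_k}\psi_k+\delta M_k^t+b_k^t$, where $\delta M_k^t$ is a martingale difference with bounded second moment (the state and action spaces being finite) and $b_k^t$ is an asymptotically vanishing truncation bias. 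The step-size conditions $\sum_t a(t)=\infty$, $\sum_t a(t)^2<\infty$ then control the noise.

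Since $b(t)=o(a(t))$, I would reuse the two-timescale separation behind Lemmas~\ref{lem:potential_converge}--\ref{lem:lm_converge}: on the fast (parameter) timescale the multiplier vector $\boldsymbol\gamma$ is quasi-static, so the iterates track the ODE $\dot\Theta=F(\Theta)$ with $F_k(\Theta)=-\nabla_{\Theta_k}\psi_k(\Theta_k,\Theta_{-k},\gamma_k)$, the stacked pseudo-gradient whose zeros are exactly $\mathcal{F}^*$. The crux is then to localize the limit set of this ODE. Unlike the cooperative DEC-POMDP, $F$ is not the gradient of any single potential, because the cross-curvatures $\nabla^2_{\Theta_n\Theta_k}\psi_k$ and $\nabla^2_{\Theta_k\Theta_n}\psi_n$ disagree; both are governed by the interference-to-signal coupling and are $O(\eta)$. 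Fixing some $\Theta^*\in\mathcal{F}^*$ and taking $V(\Theta)=\tfrac12\|\Theta-\Theta^*\|^2$, I would bound the drift as $\dot V\le -c\,\|\Theta-\Theta^*\|^2+(\text{coupling residual})$, where the $O(1)$ contraction constant $c$ comes from the negative-definite self-curvature $-\nabla^2_{\Theta_k\Theta_k}\psi_k$ supplied by the local-equilibrium condition in \eqref{eq:LE}, and a diagonal-dominance argument keeps the stacked Jacobian Hurwitz so that the trajectory does not merely cycle. Showing that, after using $F(\Theta^*)=\mathbf{0}$ and the first-order cancellations it induces, the residual is genuinely second order in the coupling, $O(\eta^2)$, pins the invariant set inside $\mathcal{S}_\theta$ with radius $\delta=O(\eta^2)$. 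Assembling these ingredients through Borkar's two-timescale (Kushner--Clark) stochastic-approximation theorem then yields $\Theta^t\to\mathcal{S}_\theta$ almost surely.

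I expect the localization step to be the main obstacle. The difficulty is to quantify precisely how the interference coupling perturbs both the pseudo-gradient field \emph{and} the location of its fixed point, and to prove that the net non-contractive drift scales as $\eta^2$ rather than the naive $\eta$ that a one-shot bound on the $O(\eta)$ cross-curvatures would give. This requires a careful second-order expansion in the coupling strength $\eta$ that exploits the cancellation of first-order terms at the equilibrium, in place of the single-potential Lyapunov argument that sufficed for the cooperative DEC-POMDP. A secondary technical point worth checking is uniform integrability of the regenerative estimator (finiteness of the moments of the cycle length $T^I$ under every policy in the relevant compact set), which is what guarantees the vanishing of the truncation bias $b_k^t$ and the boundedness of the iterates needed to invoke the stochastic-approximation theorem.
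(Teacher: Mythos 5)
Your overall framework---two-timescale stochastic approximation, regenerative gradient estimation via Lemma~\ref{lem:gradient_posg}, and localization of the limiting ODE through a quadratic Lyapunov function---matches the paper's. But the crux, the $O(\eta^2)$ localization, is handled quite differently, and the route you propose is both harder than what the paper does and left unresolved in your own sketch. The paper never expands around the coupled equilibrium $\Theta^*$ and never invokes first-order cancellations there. Instead it anchors everything at the \emph{decoupled} ($\eta=0$) system: with zero interference the users' dynamics separate, so the stacked ODE $\dot{\Lambda}=f^0(\Lambda)$ (where $\Lambda$ stacks \emph{both} $\Theta_k$ and $\gamma_k$; the appendix treats them jointly rather than freezing $\gamma$ on the slow timescale) has an exactly block-diagonal Jacobian at its equilibrium $\Lambda^0$, each block being $\bigl(\begin{smallmatrix} -\nabla^2\psi_k^0 & -\nabla\overline{P}_k^0 \\ \nabla\overline{P}_k^0 & \mathbf{0}\end{smallmatrix}\bigr)$, which is Hurwitz because $\nabla^2_{\Theta_k\Theta_k}\psi_k^0\succ\mathbf{0}$. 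The converse Lyapunov theorem then supplies $V$ with $C_1\|\Lambda-\Lambda^0\|^2\leq V\leq C_2\|\Lambda-\Lambda^0\|^2$ and drift $\leq -C_3\|\Lambda-\Lambda^0\|^2$ along $f^0$. The interference is then a Taylor perturbation $\epsilon(\Lambda)$ of the whole vector field with $\|\epsilon\|=O(\eta)$, and the standard perturbed-Lyapunov computation gives $\dot{V}<0$ outside the set where $C_3^2\|\Lambda-\Lambda^0\|^2\leq 4C_2^2\|\epsilon\|^2$. The constant $\delta=4(C_2/C_3)^2\|\epsilon\|^2=O(\eta^2)$ is therefore a bound on the \emph{squared} distance; no cancellation of first-order terms is needed, and none occurs.

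This difference matters for the step you flag as the main obstacle. You are trying to prove that the unsquared distance $\|\Theta-\Theta^*\|$ is $O(\eta^2)$, which is what the theorem statement literally says but strictly more than the paper's proof establishes (the proof yields $\|\Lambda-\Lambda^*\|^2\leq\delta=O(\eta^2)$, i.e., distance $O(\eta)$; the square is silently dropped when the invariant set is restated as $\mathcal{S}_{\theta}$). A second-order expansion around $\Theta^*$ with first-order cancellation would indeed be required for your stronger claim, but it is not available for free: the cross-curvatures $\nabla^2_{\Theta_n\Theta_k}\psi_k$ are $O(\eta)$ and generically do not cancel pairwise, so your "coupling residual" in $\dot{V}\leq -c\|\Theta-\Theta^*\|^2+(\text{residual})$ will contain terms linear in $\eta$ and linear in $\|\Theta-\Theta^*\|$, which upon completing the square again yields a ball of radius $O(\eta)$, not $O(\eta^2)$. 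If you adopt the paper's reading ($\delta$ bounds the squared norm), your argument closes without the cancellation step; if you insist on the literal reading, you need an additional idea the paper does not supply. Your secondary concerns (moments of the regeneration time $T^I$, martingale noise control) are legitimate but are also not addressed in the paper's proof of this theorem, which works entirely at the ODE level and cites the DEC-POMDP convergence analysis for the stochastic-approximation machinery.
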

\begin{proof}
Please refer to Appendix \ref{app:thm:posg}
\end{proof}
\begin{Rem}[Interpretation of Theorem \ref{thm:posg}]
From \eqref{eq:set_posg}, the error between the converged solution $\Theta^{\infty}$ and the local equilibrium of the POSG $\Theta^*$ decreases in the order of $\eta^2$ where $\eta$ represents the degree of coupling among the transmitters. ~\hfill\IEEEQED
\end{Rem}

\section{Simulations}
In this section, we shall compare the performances of the proposed decentralized solutions against various existing decentralized
baseline schemes.
\begin{itemize}
\item{\bf Baseline 1, Orthogonal Transmission:} The transmissions between the $K$ pairs are coordinated using TDMA so that there is no interference among the users. Both the AC and renewable power consumption are adaptive to LCSI and LEQSI only by optimizing the sum throughput as in \cite{Sharma:energy:2010}.
\item{\bf Baseline 2, LCSI and LEQSI Only Strategy:} The $K$ transmitters send data to their desired receiver simultaneously sharing the same spectrum. Both the AC and renewable power consumption are adaptive to LCSI and LEQSI only by optimizing the sum throughput as in \cite{Sharma:energy:2010}.
\item{\bf Baseline 3, Greedy Strategy:} The $K$ transmitters send data to their desired receiver simultaneously sharing the same spectrum. The transmitters will consume all the available renewable energy source at each frame (emptying the renewable energy buffer at each frame), and the AC power consumption is adaptive to LCSI only by optimizing the sum throughput.
\end{itemize}

In the simulation, we consider a symmetric system where $\frac{L_{ki}}{L_{kk}}=0.1,\forall k,n\neq k$ as in \cite{mimo:game:Arslan}.
The long term path loss for the desired link is 15dB, which corresponds to a cell size of 5.6km\cite{ITU:1997}. The static circuit power is $P_{cct} = 40$ (Watt) \cite{Arnold:2010}. We assume Poisson packet arrival\footnote{ Note that the proposed algorithm works for generic packet and renewable energy arrival models as depicted in Definition \ref{Def:data} and Definition \ref{Def:energy}. The Poisson model is used for simulation illustration only.}  with average arrival rate $\lambda_k$ (packet/s) and exponentially distributed random packet size with mean $\overline{N}_k$ = 2Mbits. The scheduling frame duration $\tau$ is 50ms, and the total BW is $W$ = 1MHz. The maximum data queue buffer size $N_{k}^{Q}$ is 5 (packets). Furthermore, we consider Poisson energy arrival with average arrival rate $\overline{X}_k$ (Watt) as in \cite{Sharma:energy:2010}, and the renewable energy is stored in a 1.2V 20Ah lithium-ion battery. The AC power allocation space and the renewable power allocation space is given by $\mathcal{A}_{ac}=\mathcal{A}_e=[0,300,600,900,1200,1500]$ (Watt). The average delay is considered as our utility ($f(Q_k)=Q_k/\lambda_k$), and the randomized policy $\Omega_k$ is parameterized in the form given by \eqref{eq:theta_looktable}.

\subsection{Delay Performance w.r.t. the AC power consumption}
Fig. \ref{fig:acpower} illustrates the average delay per user versus the AC power consumption $P_k^0$. The average data arrival rate is $\lambda_k=1.1$, and the energy arrival rate is $\overline{X}_k=800$. The average delay of all the schemes decreases as the AC power consumption increase, and the proposed schemes achieve significant performance gain over all the baselines. This gain is contributed by the DQSI and EQSI aware dynamic power allocation. Furthermore, it can also be observed that the solution to the non-cooperative POSG problem has similar performance as the solution to the DEC-POMDP problem.

\subsection{Delay Performance w.r.t. Number of Power Control Levels}
Fig. \ref{fig:rate} illustrates the average delay per user versus the number of power control levels that lie between 0 and 1.5kW. The average data arrival rate is $\lambda_k=1.1$, the energy arrival rate is $\overline{X}_k=800$, and the average AC power consumption is $P_k^0=800$. The average delay of the proposed schemes decreases as the number of power control levels increases, yet the performance improvement is marginal. It can also be observed that there is significant performance gain with the proposed schemes compared with all the baselines, and the solution to the non-cooperative POSG problem has similar performance as the solution to the DEC-POMDP problem.

\subsection{Delay Performance w.r.t. Renewable Energy Buffer Size}
Fig. \ref{fig:buffer} illustrates the average delay per user versus the renewable energy buffer size $N_{k}^{E}$. Specifically, we consider the lithium-ion battery given from 1.2V 10Ah to 40Ah. The average data arrival rate is $\lambda_k=1.1$, the energy arrival rate is $\overline{X}_k=800$, and the average AC power consumption is $P_k^0=500$. It can also be observed that the proposed schemes achieve significant performance gain over all the baselines at any given renewable energy buffer size.

\subsection{Convergence Performance}
Fig. \ref{fig:convergence} illustrates the convergence property of the proposed schemes. We plot the randomized power control policy $\mu_{\chi_1}(\Theta_1,\mathbf{P}_1)$ versus the scheduling frame index for the POMDP and non-cooperative POSG problems, respectively. The average data arrival rate is $\lambda_k=1.1$, the energy arrival rate is $\overline{X}_k=800$, and the average AC power consumption is $P_k^0=1100$. It can be observed that the convergence rate of the online algorithm is quite fast. For example, the delay
performance of the proposed schemes already out-performs all the baselines at the 2500-th scheduling frame. Furthermore, the delay performance at the 2500-th scheduling frame is already quite close to the converged average delay.

\black

\section{Conclusion}

In this paper, we consider the decentralized delay minimization for interference networks with limited renewable energy storage. Specifically, the transmitters are capable of harvesting energy from the environment, and the transmission power of a node comes from both the conventional utility power (AC power) and the renewable energy source. We consider two decentralized delay optimization formulations, namely the DEC-POMDP and the non-cooperative POSG, where the control policy is adaptive to local system states (LCSI, LDQSI and LEQSI) only. In the DEC-POMDP formulation, the controllers interact in a cooperative manner and the proposed decentralized policy gradient solution converges almost surely to a local optimal point under some mild technical conditions. In the non-cooperative POSG formulation, the transmitter nodes are non-cooperative. We extend the decentralized policy gradient solution and establish the technical proof for almost-sure convergence of the learning algorithms. In both cases, the solutions are very robust to model variations. Finally, the delay performance of the proposed solutions are compared with conventional baseline schemes for interference networks and it is illustrated that substantial delay performance gain and energy savings can be achieved by incorporating the CSI, DQSI and EQSI in the power control design.

\appendices
\section{Proof of Lemma \ref{lem:gradient}}\label{app:gradient}

From the perturbation analysis \cite{Cao:2007,Marbach:TAC:2001} in MDP, the gradient $\nabla_{\Theta}\psi(\Theta)$\footnote{The notation of $\boldsymbol{\gamma}$ is ignored in this section for simplicity.} is given by
\begin{equation}\label{eq:gradient_org}
\nabla_{\Theta} \psi(\Theta) = \sum_{\boldsymbol{\chi}}\pi(\boldsymbol{\chi};\Theta)\{\nabla_{\Theta} g_\psi( \boldsymbol{\chi},\Theta) +
\sum_{ \boldsymbol{\chi}^{'} } (\nabla_{\Theta} \Pr\{ \boldsymbol{\chi}^{'}| \boldsymbol{\chi},\Theta\})V( \boldsymbol{\chi}^{'} ) \},
\end{equation}
where $V( \boldsymbol{\chi} )$ satisfies the following Bellman (Possion) equation
\begin{equation}\label{eq:bellV}
V(\boldsymbol{\chi})+\psi(\Theta) = g_\psi( \boldsymbol{\chi},\Theta) + \sum_{\boldsymbol{\chi}^{'}} \Pr\{ \boldsymbol{\chi}^{'}| \boldsymbol{\chi},\Theta\}V( \boldsymbol{\chi}^{'}).
\end{equation}

Since $\sum_{\mathbf{P}}\mu_{\boldsymbol{\chi}}(\Theta,\mathbf{P})=1$ for every $\Theta$, we have
\begin{equation}\label{eq:gpr}
\begin{array}{lll}
\nabla_{\Theta_k} g_\psi( \boldsymbol{\chi},\Theta) &=& \sum_{\mathbf{P}}\mu_{\boldsymbol{\chi}}(\Theta,\mathbf{P})
\frac{\nabla \mu_{\chi_{k}}(\Theta_k,\mathbf{P})}{\mu_{\chi_{k}}(\Theta_k,\mathbf{P})} ( g_\psi( \boldsymbol{\chi},\mathbf{P}) - \psi(\Theta) )\\
\nabla_{\Theta_k} \Pr\{ \boldsymbol{\chi}^{'}| \boldsymbol{\chi},\Theta\}&=&\sum_{\mathbf{P}}\mu_{\boldsymbol{\chi}}(\Theta,\mathbf{P})
\frac{\nabla \mu_{\chi_{k}}(\Theta_k,\mathbf{P})}{\mu_{\chi_{k}}(\Theta_k,\mathbf{P})}
\Pr\{\boldsymbol{\chi}^{'}| \boldsymbol{\chi},\mathbf{P}\}.
\end{array}
\end{equation}

Substituting \eqref{eq:gpr} into \eqref{eq:gradient_org}, we have
\begin{equation}
\nabla_{\Theta_k} \psi(\Theta)=\sum_{\boldsymbol{\chi}}\sum_{\mathbf{P}} \pi(\boldsymbol{\chi};\Theta)\mu_{\boldsymbol{\chi}}(\Theta,\mathbf{P}) \frac{\nabla_{\Theta_k}\mu_{\chi_k}(\Theta_k,\mathbf{P}_k)}{\mu_{\chi_k}(\Theta_k,\mathbf{P}_k)}
q(\boldsymbol{\chi},\mathbf{P};\Theta),
\end{equation}
where
\begin{equation}
q(\boldsymbol{\chi},\mathbf{P};\boldsymbol{\gamma},\Theta)=g_\psi(\boldsymbol{\chi},\mathbf{P})-\psi(\Theta)+\sum_{\boldsymbol{\chi}^{'}} \Pr\{\boldsymbol{\chi}^{'}| \boldsymbol{\chi},\mathbf{P}\}V( \boldsymbol{\chi}^{'} ).
\end{equation}

Taking the conditional expectation (conditioned on $\{\mathbf{Q},\mathbf{E}\}$) on both sides of \eqref{eq:bellV}, we have following equivalent Bellman equation
\begin{equation}
\widetilde{V}(\mathbf{Q},\mathbf{E})+\psi(\Theta) = \mathbb{E}\left[g_\psi( \boldsymbol{\chi},\Theta)\right] + \sum_{\mathbf{Q}^{'},\mathbf{E}^{'}}\mathbb{E}\left[\Pr\{ \mathbf{Q}^{'},\mathbf{E}^{'} | \boldsymbol{\chi},\Theta\}\right]\widetilde{V}(\mathbf{Q}^{'},\mathbf{E}^{'}),
\end{equation}
where $\widetilde{V}(\mathbf{Q},\mathbf{E})=\mathbb{E}[ V( \boldsymbol{\chi}|\mathbf{Q},\mathbf{E}) ]$. It can be verified that the following differential utility $\widetilde{V}(\mathbf{Q},\mathbf{E})$ of state $(\mathbf{Q},\mathbf{E})$ satisfying the above equivalent Bellman equation
\begin{equation}
\widetilde{V}(\mathbf{Q},\mathbf{E})=\mathbb{E}^{\Omega^{\Theta}}\left[ \sum_{t=0}^{T^I-1}( g_\psi( \boldsymbol{\chi},\Theta)- \psi(\Theta) )  | \mathbf{Q}(0)=\mathbf{Q},\mathbf{E}(0)=\mathbf{E} \right],
\end{equation}
where $T^I=\min\{ t>0| \mathbf{Q}^t=\mathbf{Q}^I,\mathbf{E}^t=\mathbf{E}^I\}$ is the first future time that reference state $(\mathbf{Q}^I,\mathbf{E}^I)$ is visited. Therefore, we have
\begin{equation}
\begin{array}{lll}
&q(\boldsymbol{\chi},\mathbf{P};\Theta)=g_\psi(\boldsymbol{\chi},\mathbf{P})-\psi(\Theta)+\sum_{\mathbf{Q}^{'},\mathbf{E}^{'}} \Pr\{\mathbf{Q}^{'},\mathbf{E}^{'} | \boldsymbol{\chi},\mathbf{P}\}\widetilde{V}(\mathbf{Q}^{'},\mathbf{E}^{'})\\
=&\mathbb{E}^{\Omega^{\Theta}}\Big[ \sum_{t=0}^{T^I-1}\big(g_\psi(\boldsymbol{\chi}(t),\mathbf{P}(t))- \psi(\Theta)\big)| \boldsymbol{\chi}(0)=\boldsymbol{\chi},\mathbf{P}(0)=\mathbf{P} \Big],
\end{array}
\end{equation}
which finishes the proof.

\section{Proof of Lemma \ref{lem:potential_converge}}\label{app:potential_converge}

In timescale I, we can rewrite the update equations in \eqref{eq:learning} as follows
\begin{equation}
r^{t+1}=r^{t}+a(t)R(\boldsymbol{\chi}^t,\mathbf{P}^t,r^{t})
\end{equation}
where $r^t=(\Theta^t,\widetilde{L}^t)$, and
\begin{equation}
R(\boldsymbol{\chi}^t,\mathbf{P}^t,r^{t})=\left[ -(g_\psi(\boldsymbol{\chi}^t,\mathbf{P}^t)-\widetilde{L}^t)z_k^t \atop  -( g_\psi(\boldsymbol{\chi}^t,\mathbf{P}^t) - \widetilde{L}^t  ) \right].
\end{equation}

Define $t_m$ the $m$-th time that the recurrent state $(\mathbf{Q}^I,\mathbf{E}^I)$ is visited, and we have
\begin{equation}
r^{t_{m+1}}=r^{t_m}+\sum_{t=t_m}^{t_{m+1}-1}a(t)R(\boldsymbol{\chi}^t,\mathbf{P}^t,r^{t})
=r^{t_m}+\widetilde{a}(m)h(r^{t_m})+\varepsilon^{m}
\end{equation}
where $\widetilde{a}(m)=\sum_{t=t_m}^{t_{m+1}-1}a(t)$, $\varepsilon^{m}=\sum_{t=t_m}^{t_{m+1}-1}a(t)( R(\boldsymbol{\chi}^t,\mathbf{P}^t,r^{t})- h(r^{t_m}) )$, and
\begin{equation}
h(r^{t_m})=
\left[ -\nabla_{\Theta}\psi(\Theta^{t_m})-(\psi(\Theta^{t_m})-\widetilde{L}^{t_m})
W(\Theta^{t_m})/\mathbb{E}^{\Omega^{\Theta^{t_m}}}[T^I] \atop -(\psi(\Theta^{t_m})-\widetilde{L}^{t_m}) \right],
\end{equation}
where $W(\Theta)=[W_1(\Theta),\cdots,W_K(\Theta)]$, and $W_k(\Theta)=\mathbb{E}^{\Omega^{\Theta}}\big[\sum_{t=t_m}^{t_{m+1}-1}(t_{m+1}-t) \frac{\nabla_{\Theta_k}\mu_{\chi_k}(\Theta_k,\mathbf{P}_k)}{\mu_{\chi_k}(\Theta_k,\mathbf{P}_k)}  \big]$.
Next we shall show that the following holds almost surely.
\begin{equation}
\sum_{m=1}^{\infty}\widetilde{a}(m)=\infty,\quad\sum_{m=1}^{\infty}[\widetilde{a}(m)]^2<\infty.
\end{equation}

Specifically, $\sum_{m=1}^{\infty}\widetilde{a}(m)=\sum_{t=1}^{\infty}a(t)=\infty$. Furthermore, since $a(t)$ is non-increasing and $(\mathbf{Q}^I,\mathbf{E}^I)$ is a recurrent state, we have
\begin{equation}
\begin{array}{ll}
\mathbb{E}[\sum_{m=1}^{\infty}[\widetilde{a}(m)]^2]\leq \mathbb{E}[\sum_{m=1}^{\infty}a(t_m)^2(t_{m+1}-t_{m})^2] =\mathbb{E}[\sum_{m=1}^{\infty}a(t_m)^2\mathbb{E}(t_{m+1}-t_{m})^2]<\infty.
\end{array}
\end{equation}

Therefore, $\sum_{m=1}^{\infty}[\widetilde{a}(m)]^2$ has finite expectation and is finite almost surely. Following the same way, it is easy to infer that $\sum_{m}\varepsilon^{m}$ has finite expectation and is finite almost surely, and hence $\sum_{m}\varepsilon^{m}$ converges almost surely. Since $\varepsilon_m$ and $\widetilde{a}(m)$ converges to zero almost surely, and $h(r^{t_m})$ is bounded, we have
\begin{equation}
\lim_{m\to\infty}(r^{t_{m+1}}-r^{t_m})=0.
\end{equation}

Then, similar to the proof of \cite[Lemma 11]{Marbach:TAC:2001}, we can show that $\psi(\Theta^{t_m})$ and $\widetilde{L}^{t_m}$ converge to a common limit. Since $\psi(\Theta^{t_m})-\widetilde{L}^{t_m}$ converges to zero, the algorithm of the per-user parameter update is given by
\begin{equation}
\Theta^{t_{m+1}}=\Theta^{t_m}+\widetilde{a}(m)(\nabla \psi(\Theta^{t_m})+e^m)+\varepsilon^{m},
\end{equation}
where $e^m$ converges to zero and $\varepsilon^{m}$ is a summable sequence. This is a gradient method with diminishing errors. Therefore, by following the same way as in \cite{Borkar:2008} and \cite{Bertsekas:gradient:2000}, we can conclude that the learning algorithm will converges to a equilibrium $\Theta^{\infty}$ almost surely, given by
\begin{equation}
\nabla_{\Theta}\psi(\Theta^{\infty},\boldsymbol{\gamma})=0. 
\end{equation}

\section{Proof of Lemma \ref{lem:lm_converge}}\label{app:lm_converge}
Due to the separation of timescale, the primal update of the
per-user parameter can be regarded as converged to
$\Theta_k^*(\boldsymbol{\gamma}^t)$ w.r.t. the current LMs
$\boldsymbol{\gamma}^t$. Specifically, for timescale II, we can rewrite the update equations in \eqref{eq:learning} as follows
\begin{equation}
\gamma_k^{t+1}=\Big[ \gamma_k^{t+1}+b(t)\big( \overline{P}_k(\Theta^*(\boldsymbol{\gamma}))-P_k^0 +
\underbrace{P_{k,ac}-\overline{P}_k(\Theta^*(\boldsymbol{\gamma}))}_{w_k^{t+1}} \big)\Big]^+.
\end{equation}

Let $\mathbb{F}^t=\sigma(\boldsymbol{\gamma}^{l},w_k^{l},l\leq t)$ be the $\sigma$-algebra generated by $\{\gamma_k^{l},w_k^{l},l\leq t\}$. Note that $\mathbb{E}[w_k^{t+1}|\mathbb{F}^t] = 0$, and $\mathbb{E}[||w_k^{t+1}||^2|\mathbb{F}^t] \leq C_1(1+||\boldsymbol{\gamma} ||)$ for a suitable constant $C_1$. Using the standard stochastic approximation
argument \cite{Borkar:2008}, the dynamics of the LMs learning
equation in \eqref{eq:learning} for user $k$ can be represented by the
following ordinary differential equation (ODE):
\begin{equation}\label{eq:gamma_ODE}
\dot{\gamma_k}(t)=\overline{P}_k(\Theta^*(\boldsymbol{\gamma}(t)))-P_k^0,
\end{equation}
where
$\Theta^*(\boldsymbol{\gamma}(t))$ is the converged per-user parameter under the LM $\boldsymbol{\gamma}^t$. Define
\begin{equation}
G(\boldsymbol{\gamma})=\psi(\Theta^*(\boldsymbol{\gamma}),\boldsymbol{\gamma})=\sum_k\left(\beta_k\overline{T}_k(\Theta^*(\boldsymbol{\gamma}))+ \gamma_k(\overline{P}_k(\Theta^*(\boldsymbol{\gamma}))-P_k^0) \right).
\end{equation}

By the chain rule and $\nabla_{\Theta}\psi(\Theta^*(\boldsymbol{\gamma}),\boldsymbol{\gamma})=\mathbf{0}$, we have $\frac{\partial G(\boldsymbol{\gamma})}{\partial \gamma_k}=\sum_{k,i}\frac{\partial \psi(\Theta,\boldsymbol{\gamma})}{\partial \theta_{ki}}\frac{\partial \theta_{ki}}{\partial \gamma_k}\big|_{\Theta=\Theta^*(\boldsymbol{\gamma})}+\frac{\partial \psi(\Theta,\boldsymbol{\gamma})}{\partial \gamma_k}=\overline{P}_k(\Theta^*(\boldsymbol{\gamma}))-P_k^0$. Therefore, we show that the ODE in \eqref{eq:gamma_ODE} can be
expressed as $\dot{\gamma_k}(t) = \frac{\partial G(\boldsymbol{\gamma}(t))}{\partial \gamma_k}$. As a result, the ODE in
\eqref{eq:gamma_ODE} will either converge to $\overline{P}_k(\Theta^*(\boldsymbol{\gamma}))-P_k^0=0$ or $\{\frac{\partial G(\boldsymbol{\gamma}(t))}{\partial \gamma_k}<0,\gamma_k=0\}$ which satisfies the average power constraints in \eqref{eq:pwr_con} .

\section{Proof of Theorem \ref{thm:posg}}\label{app:thm:posg}
Note that when $\eta=0$, i.e., the interference is zero for each user, we have $\overline{T}_k(\Theta_k,\Theta_{-k};\eta=0)=\overline{T}_k(\Theta_k)$ and $\overline{P}_k(\Theta_k,\Theta_{-k};\eta=0)=\overline{P}_k(\Theta_k)$ in problem \eqref{eq:problem:POSG}. In other words, the other users' control policies $\Omega_{-k}^{\Theta_{-k}}$ do not influence the average delay utility $\overline{T}_k$ and AC power consumption $\overline{P}_k$ for user $k$, since there is no interference. Specifically, denote $\overline{T}_k^0(\Theta_k)=\overline{T}_k(\Theta_k,\Theta_{-k};\eta=0)$ $\overline{P}_k^0(\Theta_k)=\overline{P}_k(\Theta_k,\Theta_{-k};\eta=0)$, and hence $\psi_k^0(\Theta_k,\gamma_k)=\psi_k(\Theta_k,\Theta_{-k},\gamma_k;\eta=0)$. From the convergence analysis in Section \ref{sec:pomdp_convergence}, the per-user parameter and LM will converges to the equilibrium point $\Theta_k^0=\Theta_k^*$ given by
\begin{equation}
\nabla_{\Theta_k}\psi_{k}(\Theta_k^0,\Theta_{-k}^0,\gamma_k^0)=0, \nabla_{\Theta_k\Theta_k}\psi_{k}(\Theta_k^0,\Theta_{-k}^0,\gamma_k^0)\succ\mathbf{0}
\end{equation}
and $\gamma^0$ satisfies the AC power constraint.

Let $\lambda_k=[\Theta_k;\gamma_k]$ and $\Lambda=[\lambda_1,\cdots,\lambda_K]$. From \cite{Borkar:2008},
we can rewrite the update algorithm as the following ODE
\begin{equation}
\dot{\Lambda}(t)=f^0(\Lambda(t))=\left[ \left.
 -\nabla_{\Theta_1}\psi_1^0(\Theta_1,\gamma_1) \atop \overline{P}_1^0(\Theta_1)-P_1^0 \right. ;
 \cdots; \left. -\nabla_{\Theta_K}\psi_K^0(\Theta_K,\gamma_K) \atop \overline{P}_K^0(\Theta_K)-P_K^0\right.
\right].
\end{equation}

Note that $f^0(\Lambda^0)=0$, i.e., $\Lambda^0$ is the equilibrium point for the above ODE. The Jacobian matrix $Df^0(\Lambda^0)$ at the equilibrium point $\Lambda_0$ is given by
\begin{equation}
Df^0(\Lambda^0)=\left(
\begin{array}{ccc}
\begin{array}{ccc}
-\nabla_{\Theta_1\Theta_1}^2\psi_1^0(\Theta_1,\gamma_1) & -\nabla_{\Theta_1}\overline{P}_1^0(\Theta_1)  \\
\nabla_{\Theta_1}\overline{P}_1^0(\Theta_1) &  \mathbf{0}
\end{array} & \cdots \\
\cdots &
\begin{array}{ccc}
-\nabla_{\Theta_K\Theta_K}^2\psi_K^0(\Theta_K,\gamma_K) & -\nabla_{\Theta_K}\overline{P}_K^0(\Theta_1)  \\
\nabla_{\Theta_K}\overline{P}_K^0(\Theta_K) &  \mathbf{0}
\end{array}
  \end{array}
  \right).
\end{equation}
Since $\nabla_{\Theta_K\Theta_K}^2\psi_K^0(\Theta_K,\gamma_K)\succ\mathbf{0}$, it has been shown in \cite{ber:book:1999} that all eigenvalues of $Df^0(\Lambda^0)$ have strictly negative real parts. Therefore, $\Lambda^0$ is exponentially stable. By converge of Lyapunov Theorem \cite{Lyapunov:1998}, there exists a Lyapunov function $V(\Lambda)$ for $\dot{\Lambda}(t)=f^0(\Lambda(t))$, s.t.
$C_1||\Lambda-\Lambda^0||^2\leq V(\Lambda)\leq C_2||\Lambda-\Lambda^0||^2 $, and $\frac{\text{d}V(\Lambda)}{\text{d}\Lambda}
f^0(\Lambda)\leq-C_3||\Lambda-\Lambda^0||^2,\forall \Lambda $ for some positive constant $\{C_1,C_2,C_3\}$.

When $\eta\neq0$, let $\eta_k=\max_i\frac{L_{ki}}{L_{kk}}$, and from Taylor expansion we have
\begin{eqnarray}
\psi_k(\Theta,\gamma_k)&=&\psi_k^0(\Theta_k,\gamma_k)+\sum_{i=1}^K\frac{L_{ki}}{L_{kk}}\frac{\partial \psi_k^0(\Theta_k,\gamma_k)}{\partial L_{ki}/L_{kk}}+O(\eta_k^2) \\
\overline{P}_k(\Theta,\gamma_k)&=&\overline{P}_k^0(\Theta_k,\gamma_k)+\sum_{i=1}^K\frac{L_{ki}}{L_{kk}}\frac{\partial \overline{P}_k^0(\Theta_k,\gamma_k)}{\partial L_{ki}/L_{kk}}+O(\eta_k^2)
\end{eqnarray}
Therefore, we can rewrite the update algorithm as the following ODE
\begin{equation}
\dot{\Lambda}(t)=f^0(\Lambda(t))+\epsilon(\Lambda(t))
\end{equation}
where $||\epsilon(\Lambda(t))||=O(\eta)$. Then we have
\begin{equation}\begin{array}{lll}
\dot{V}(\Lambda)\triangleq
\frac{\text{d}V}{\text{d}t}&=&\frac{\text{d}V}{\text{d}\Lambda}\dot{\Lambda}
=\frac{\text{d}V}{\text{d}\Lambda}(
f^0(\Lambda)+\epsilon(\Lambda)
) \leq-C_3||\Lambda-\Lambda^0||^2+2
C_2||\Lambda-\Lambda^0||\cdot||\epsilon(\Lambda)||\\
&=&-||\Lambda-\Lambda^0||(C_3||\Lambda-\Lambda^0||-
2 C_2||\epsilon(\Lambda)|| ).
\end{array}.
\end{equation}

Note that $\dot{V}(\Lambda)<0$ for all
$\Lambda$ s.t.
$(C_3)^2||\Lambda-\Lambda^0||^2\geq
4(C_2)^2||\epsilon(\Lambda)||^2=\delta=O(\eta^2)$.
As a result, $\Lambda^t$ converges almost surely to an
invariant set given by $\mathcal{S}\triangleq
\left\{\Lambda:
||\Lambda-\Lambda^0||^2-\delta \leq0
\right\}$. Furthermore, from $\dot{V}(\Lambda^*)=0$, we
have $||\Lambda^*-\Lambda^0||^2-\delta
\leq0$, and hence the invariance set is also given by $
\mathcal{S}\triangleq \left\{\Lambda:
||\Lambda-\Lambda^*||^2-\delta
\leq0 \right\}$. Finally, we can conclude that the iterations of the per-user parameter
$\Theta^t$ in the proposed learning algorithm \ref{alg:learning_posg} will converge almost surely to an invariant set given by
\begin{equation}
\mathcal{S}_{\theta}\triangleq\{ \Theta: ||\Theta - \Theta^*||-\delta\leq0\}
\end{equation}
as $t\to\infty$, for some positive constant $\delta=O(\eta^2)$ and some $\Theta^*\in\mathcal{F}^*$.

\bibliographystyle{IEEEtran}
\bibliography{IEEEabrv,POSG}


{
\begin{table}[h]
\begin{center}
\caption{ Communication overhead comparison for exchanging the per-stage utility and sharing the buffer states and the CSI states}{
\begin{tabular}{|c|c|c|}
\hline & Exchanging the per-stage utility & Sharing the buffer states and the CSI states \\
\hline communication overhead & $O\big(\prod_k(N_k^QN_k^E)\big)$ & $O\big(\prod_k(N_k^QN_k^E)\prod_{k,n}N_{kn}^{H}\big)$ \\
\hline
\end{tabular}}
\label{tab:comp}
\end{center}
\end{table}
}

\begin{figure}[h]
 \begin{center}
  \resizebox{14cm}{!}{\includegraphics{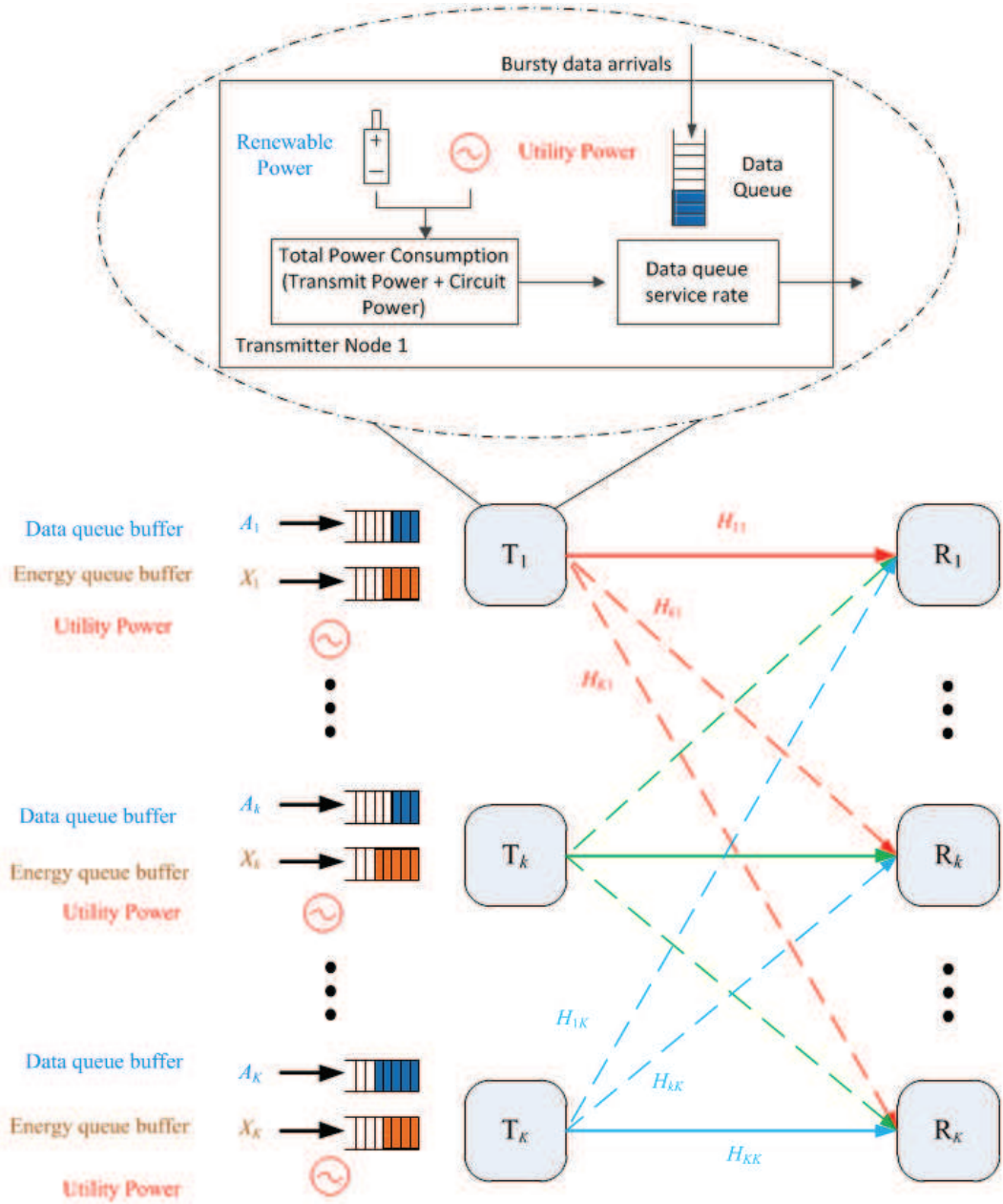}}
 \end{center}
    \caption{
System model. Each transmitter maintains a data queue for the random traffic flow towards the desired receiver in the system.
Furthermore, each transmit node is capable of harvesting energy from the environment and storing it in an energy buffer.}
    \label{fig:system_model}
\end{figure}

\begin{figure}
 \begin{center}
  \resizebox{14cm}{!}{\includegraphics{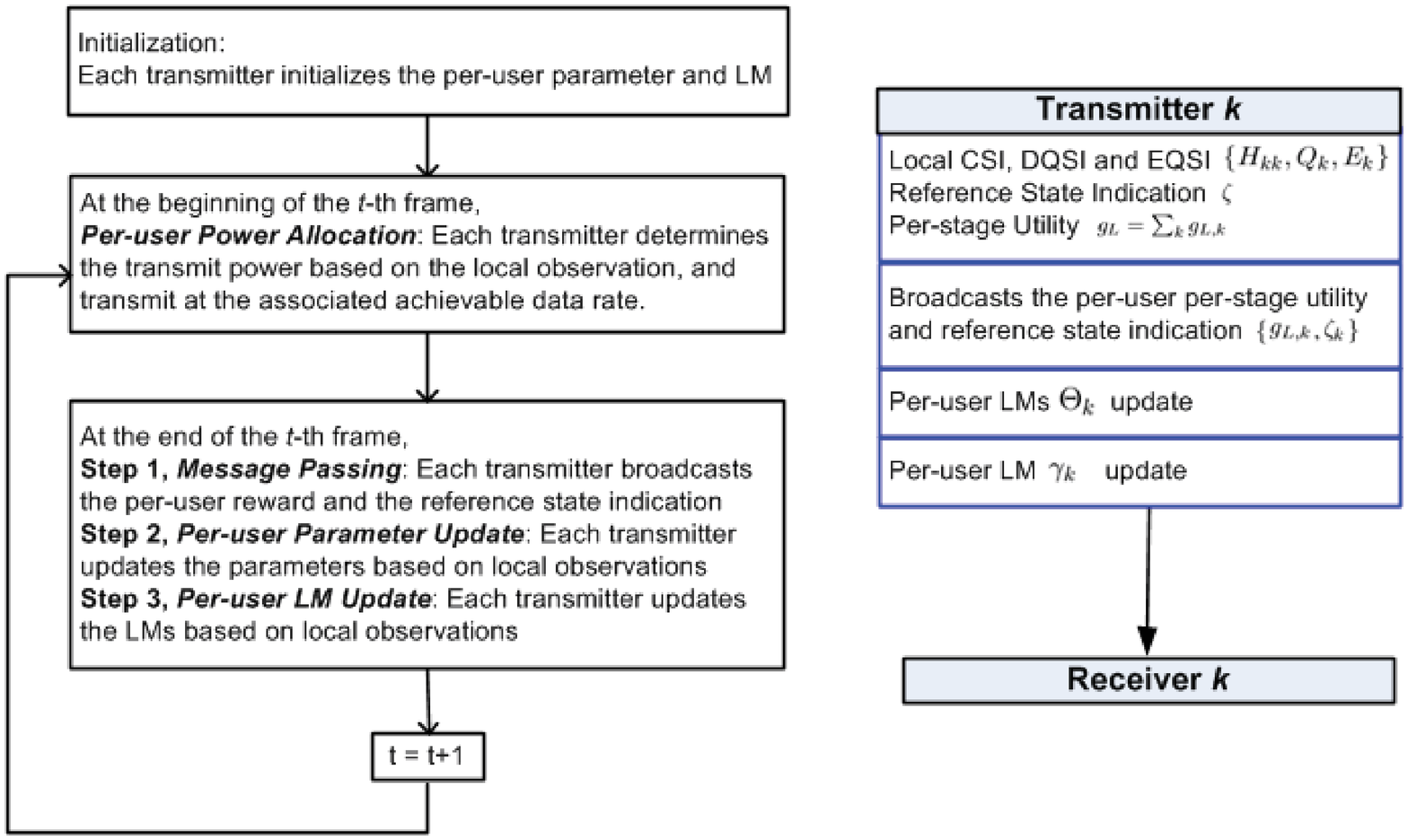}}
 \end{center}
    \caption{
The system procedure of the decentralized per-user parameter and LM online learning algorithm for DEC-POMDP problem.}
    \label{fig:learning_pomdp}
\end{figure}

\begin{figure}
 \begin{center}
  \resizebox{14cm}{!}{\includegraphics{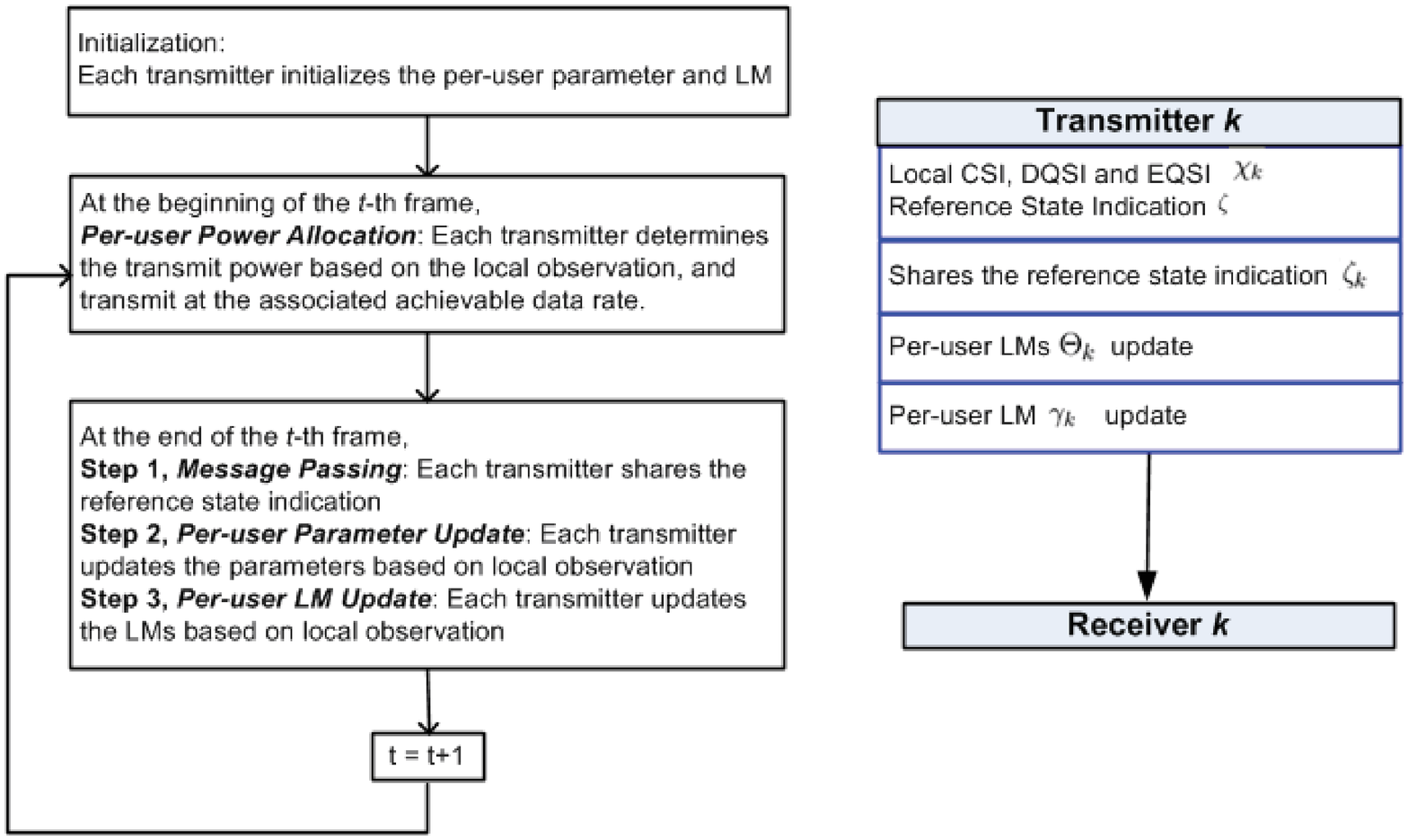}}
 \end{center}
    \caption{
The system procedure of the decentralized per-user parameter and LM online learning algorithm for POSG problem.}
    \label{fig:learning_posg}
\end{figure}

\begin{figure}
 \begin{center}
  \resizebox{11cm}{!}{\includegraphics{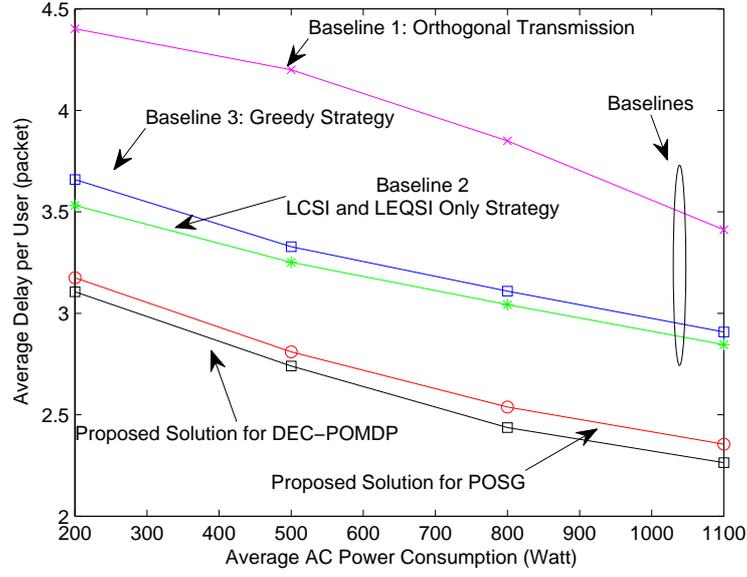}}
 \end{center}
    \caption{
Delay performance per user versus the AC power consumption $P_k^0$. The average data arrival rate is $\lambda_k=1.1$ (packet per second), and energy arrival rate is $\overline{X}_k=800$ (Watt). The renewable energy is stored in a 1.2V 20Ah lithium-ion battery. }
    \label{fig:acpower}
\end{figure}

\begin{figure}
 \begin{center}
  \resizebox{11cm}{!}{\includegraphics{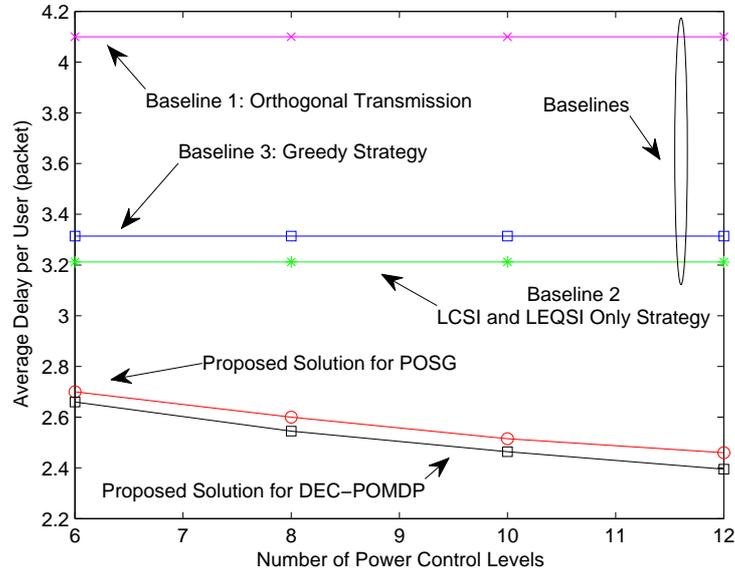}}
 \end{center}
    \caption{
Delay performance per user versus the number of power control levels that lie in 0 and 1.5kW. The average data arrival rate is $\lambda_k=1.1$ (packets per second), the energy arrival rate is $\overline{X}_k=800$ (Watt), and the average AC power consumption is $P_k^0=800$ (Watt). The renewable energy is stored in a 1.2V 20Ah lithium-ion battery. }
    \label{fig:rate}
\end{figure}

\begin{figure}
 \begin{center}
  \resizebox{14cm}{!}{\includegraphics{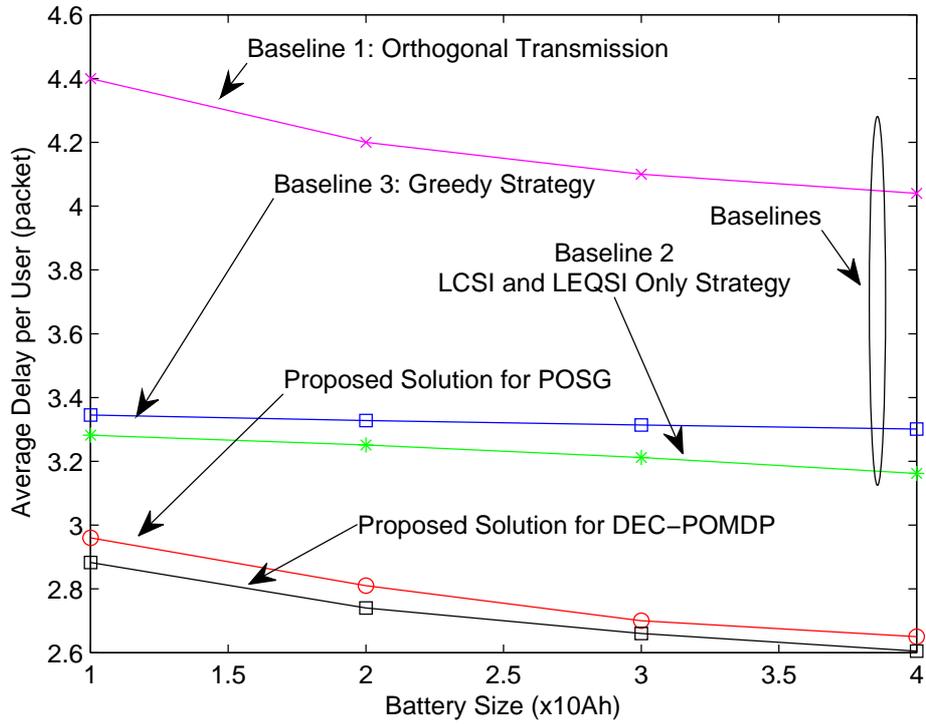}}
 \end{center}
    \caption{
Delay performance per user versus the renewable energy buffer size $N_{k}^{E}$. Specifically, we consider the lithium-ion battery given from 1.2V 10Ah to 40Ah. The average data arrival rate is $\lambda_k=1.1$ (packets per second), the energy arrival rate is $\overline{X}_k=800$ (Watt), and the average AC power consumption is $P_k^0=500$ (Watt). }
    \label{fig:buffer}
\end{figure}

\begin{figure}
\begin{center}
  \subfigure[POMDP Problem]
  {\resizebox{8cm}{!}{\includegraphics{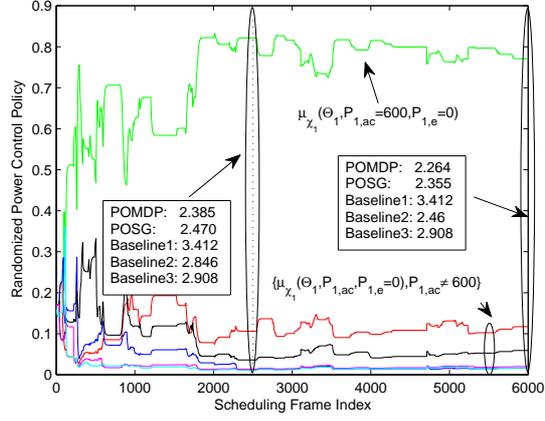}}}
  \subfigure[POSG Problem]
  {\resizebox{8cm}{!}{\includegraphics{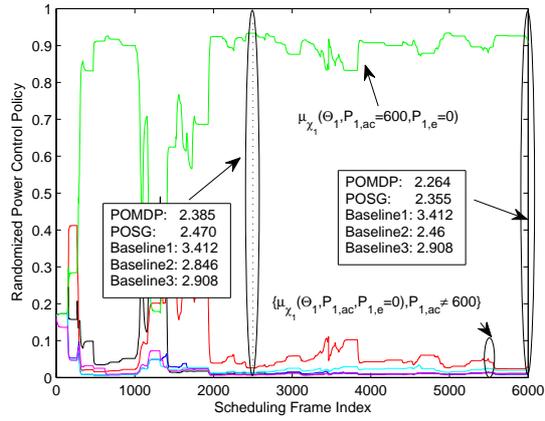}}}
  \end{center}
    \caption{ Convergence property of the proposed scheme. The average data arrival rate is $\lambda_k=1.1$ (packets per second), the energy arrival rate is $\overline{X}_k=800$ (Watt), and the average AC power consumption is $P_k^0=1100$ (Watt). The figure illustrates the instantaneous randomized power control policy $\mu_{\chi_1}(\Theta_1,\mathbf{P}_1)$ ($Q_1=2,E_1=0$) versus scheduling frame index for the POMDP and POSG problems, respectively. The boxes indicate the average delay of various schemes at the selected frame indices.}
    \label{fig:convergence}
\end{figure}

\end{document}